\newtheorem{theorem}{Proposition}
\newtheorem{question}{Question}
\newtheorem{lemma}{Lemma}
\newtheorem{defn}{Definition}
\newtheorem{asspt}{Assumption}
\begin{document}
\title{Non-Bayesian Social Learning with Imperfect Private Signal Structure}

\author{
	\uppercase{Sannyuya Liu}, 
	\uppercase{Zhonghua Yan*(\lowercase{hit\_yan@mail.ccnu.edu.cn})},

	\uppercase{Xiufeng Cheng} and
	\uppercase{Liang Zhao}
}

\maketitle

\begin{abstract}

	As one of the classic models that describe the belief dynamics over social networks, non-Bayesian social learning model assumes that members in the network possess accurate signal knowledge through the process of Bayesian inference.
	In order to make the non-Bayesian social learning model more applicable to human and animal societies, this paper extended this model by assuming the existence of private signal structure bias: each social member in each time step uses an imperfect signal knowledge to form its Bayesian part belief, then incorporates its neighbors' beliefs into this Bayesian part belief to form a new belief report.
	First, we investigated the intrinsic learning ability of an isolated agent and deduced the conditions that the signal structure needs to satisfy for this isolated agent to make an eventually correct decision.
	According to these conditions, agents' signal structures were further divided into three different types, "conservative," "radical," and "negative". Then we switched the context from isolated agents to a connected network, our propositions and simulations show the conservative agents are the dominant force for the social network to learn the real state, while the other two types might prevent the network from successful learning. 
	Although fragilities do exist in non-Bayesian social learning mechanism, "be more conservative" and "avoid overconfidence" could be effective strategies for each agent in the real social networks to collectively improve social learning processes and results.

\end{abstract}

\maketitle

\section{Introduction}
\label{sec:introduction}

\IEEEPARstart{L}{earning}
the truth of the world from signals we observed and information provided by others is an essential part of our daily social life.
For most of the time, we hope that our decisions, beliefs, and opinions about the objective world could become more and more precise with the information provided by our social neighbors, especially when our knowledge is inaccurate or inadequate to deal with what we observed. What is the mechanism by which individuals use social connections to form beliefs and make decisions? It is one of the key questions that various social learning models try to answer\cite{lewis2011network, easley2010networks, golub2017learning, chamley2013models, conradt2003group}.

The non-Bayesian social learning model proposed in \cite{jadbabaie2012non} is a model that describes the belief dynamics over the social network.
In this model, each agent of the finite network receives discrete time signal inputs from the objective world and generate discrete time belief outputs. Specifically, in each discrete time step, two information sources are used to update an agent's belief.
One information source is the individual information source, which includes agent's belief report at the last time step and its newly observed signal,  the agent uses this information source to form its personal belief through Bayesian inference.
Another information source is the social information source, which includes the last time step belief reports of all its neighbors.
At the end of the new time step, each agent's new belief report is a convex combination of the personal belief and neighbors' beliefs.
The non-Bayesian social learning model applies well to real social networks, because it takes factors from social, psychological, physical and cognitive aspects into account, these factors include (1) forgetting mechanism, (2) communication, (3) individual's naive style to aggregate information, (4) individual's Bayesian style to make decisions, and (5) individuals are incapable of engaging in fully rational learning.
Along with the model, the authors in \cite{jadbabaie2012non} give sufficient conditions for a network to eventually and collectively learn the world's actual state. 
Their results show that naive information aggregation rule can achieve successful learning, even agents in the network are unaware of important aspects like the network topology and signal structure of other agents.

One default assumption of the model of \cite{jadbabaie2012non} is that, in order to form the individual part belief in each time step, each agent must perform Bayesian inference based on signal statistical models about all possible states.
Each agent's signal spaces and signal statistical models about possible states may vary from each other, but it assumes that everyone in the network possesses the accurate signal statistical model about the real state.

However, in many social contexts, this ideal and harsh condition cannot be met: While network members are unaware of the actual state of the world, they may also use inaccurate knowledge to make Bayesian inference.
There are many reasons for this, such as lacking precise knowledge, personality, attitudes, preferences, and so on.
Therefore, it is necessary to extend the non-Bayesian social learning model by considering this imperfect signal knowledge assumption.
Based on the non-ideal assumption that each agent possesses an imperfect private signal structure, the mathematical contributions of our work are listed as follows:

1. For an isolated agent, we show that different signal structure conditions ensure different types of asymptotic learning (Proposition \ref{THM:INDIV_RADICAL} convergence in probability and Proposition \ref{THM:INDIV_CONSERV} almost sure convergence). Accordingly, agents' signal structures are further divided into three different types, conservative, radical and negative.

2. In the network context, we show that individual almost sure convergence could lead to collective almost sure convergence (Proposition \ref{THM:NETWORKALMOSTSURE}).
Counter-intuitively, we show individual convergence in probability does not guarantee collective convergence in any sense  (Proposition \ref{THM:NETWORKUNCERTAIN}). 

3. We show that in the imperfect signal structure scenario, there is an approximately linear relationship between learning speed and property of signal structure (Proposition \ref{THM:LEARNINGRATE}).

Based on the above mathematical and simulation results, in Section \ref{sec:discuss}, we find some mathematical fragilities in non-Bayesian social learning mechanism.
We further explain the mathematically fragile yet practically effective social learning mechanism by introducing some realistic assumptions into the ideal model, such as to give self-reliance a meaning of self-confidence and allow agents to adjust their signal structures with the change of social pressure.
With these assumptions, we propose signal structure adjusting strategies by which the individuals could collectively improve the learning of the network.
At last, some typical decision-making patterns are extracted based on these adjusting strategies. These decision-making patterns match not only private signal structure types but also people's decision-making styles, which also reveals the implementation value of our work.

Our propositions, numerical simulations, analyses, and conclusions could inspire the in-depth understanding of how different types of members influence the evolution of beliefs in real life social networks.

\section{Related Literature}
Social learning in microeconomics studies is sequential, or in an expanding topology \cite{banerjee1992simple, golub2010naive, acemoglu2011bayesian}. 
In \cite{golub2010naive}, the authors interpret the "wisdom of crowd" in an expanding network mathematically, showing the group intelligence is a network version of the law of large numbers.
In \cite{acemoglu2011bayesian}, the authors discussed the learning problem in a sequential social learning model, where the newly joined node could make rational decisions based on a signal and some predecessors' decisions. 

The non-Bayesian social learning model we adopt in this paper is a type of DeGroot-style learning \cite{degroot1974reaching}, which allows agents in the network to merge neighbors' opinions in a weighted averaging approach in each time step. 
Compared with sequential learning, DeGroot model based on Markov chain theory provides a system perspective to understand social learning. 
In empirical research \cite{chandrasekhar2012testing}, it is verified that DeGroot model could well describe the opinion dynamics in a social network. 
DeGroot model is the foundation of many following social learning models \cite{friedkin1990social, gale2003bayesian}. 
In \cite{friedkin1990social}, the authors model the opinion dynamics of the social network from the angle of the control system with defined inputs and discrete time opinion outputs. 
Model in \cite{gale2003bayesian} introduces Bayesian inference into the network; each agent in the network receives a signal at the beginning, then repeatedly, it observes its neighbors' actions and makes fully rational decisions.
In subsequent research \cite{choi2005learning}, the authors conducted a series of experiments to validate this Bayesian learning model within a three-person network.
One common feature of above studies and non-Bayesian social learning model \cite{jadbabaie2012non} is that they are all social physics model researches targeting human and animal social networks, which means the nodes in networks are assumed to be social members, and edges in the network are assumed to be social connections.

In this paper, we adopt the non-Bayesian social learning model in \cite{jadbabaie2012non}, which uses static topology, naive information aggregation rule, Bayesian self-renewal and allows constant arrival of new signals. 
Compared with sequential learning with expanding topologies, the non-Bayesian social learning model is more like a dynamic system with discrete and distributed signal inputs and belief outputs. 
Our work is closely related to the researches on extensions or variations of non-Bayesian social learning model, such as \cite{molavi2018theory, qipeng2015distributed, nedic2017fast, jadbabaie2013information, liu2011non}.
Some of them focus on variations of learning rule, such as in \cite{molavi2018theory} the authors bridge the gap between different information aggregation rules, and traits of different learning rules are discussed.
In \cite{qipeng2015distributed}, the authors conducted a comprehensive comparison of different learning rules.
In \cite{nedic2017fast}, the authors consider a scenario where each agent's signal structure about the underlying state may be not the actual distribution of signal, which is very similar to our imperfect signal structure scenario; a difference between their work and ours is that our work adopts a naive information aggregation rule that applies to human and animal social networks, while their work applies better to computer decision-making systems (we discuss the effects of different learning rules in Section \ref{sec:social_learning}).
In \cite{jadbabaie2013information}, the authors define the speed of learning, give it an analytical form of bounds, uncover the relationship between signal structure, network topology, and network's learning speed; in our work, we investigate the learning speed problem under our assumptions. 

In addition to these model extensions and variations, studies on the characteristics and performance of non-Bayesian social learning are also relevent to our work, such as \cite{liu2013social, jiang2015indian, rosas2017technological, wang2015social, ho2015robust}. 
In \cite{jiang2015indian}, the performance (learning speed) of non-Bayesian social learning is discussed in a gaming model. 
In \cite{liu2013social} the authors introduce a confidence radius into non-Bayesian social learning model and demonstrate a clustering phenomenon in their model, their analysis approach for the roles played by different types of agents is similar to our work. 
In \cite{rosas2017technological}, information cascades phenomena in social learning models (including the non-Bayesian social learning model) are discussed in detail.

Our work is also inspired by social science studies which reveal inefficiencies or fragilities of social learning, such as \cite{noth2002information, lorenz2011social, pentland2013beyond}. In \cite{noth2002information} information cascades and overconfidence are presented and discussed in a strictly designed experiment.
In \cite{lorenz2011social}, the authors use a series of experiments to demonstrate and analyze statistical, social and psychological effects that ruin the wisdom of crowd.
In \cite{pentland2013beyond}, the author reveals the echo chamber effect of social learning and offers advises about how to avoid it. 
Besides, in \cite{mengel2011decision} the authors discuss the participants' decision-making features in imperfect signal structure scenario via well-designed experiments, which is similar to the scene in our work. 
Compared with these quantitative and qualitative studies, our work tends to give mathematical explanations for observational learning and collective decision-making in social networks and to reveal insights of social learning mechanism that are difficult to reveal by empirical studies.

\section{Model, Assumptions and Main Question}

We consider a finite set $V=\{1,2,\dots,N\}$ containing $N$ agents interacting over a social network, which could be represented by a directed graph $G=(V,E)$. The element $(i,j)$ in the edge set $E$ represents a directed edge connecting agent $i$ to agent $j$, which captures the fact that agent $j$ has access to the belief held by agent $i$.
For each agent $i$, define $\mathcal{N}_i=\{j\in V:(j,i)\in E\}$, called agent $i$'s neighbor set.
Elements in set $\mathcal{N}_i$ are agents whose beliefs are available to agent $i$.
Agent $i$ is isolated if $\mathcal{N}_i=\Phi$, which means this agent has no neighbors.
A directed path in $G=(V,E)$ from agent $i$ to agent $j$ is a sequence of agents starting with $i$ and ending with $j$ such that each agent is a neighbor of the next agent in the sequence.
The social network would be strongly connected if there is at least one directed path from each agent to any other agents.

We write $M_{i\cdot}$ and $M_{\cdot j}$ to denote the $i$-th row and $j$-th column of a matrix $M$.
$M$ is stochastic if its entries are non-negative and the sum of entries in each row is equal to 1.
We define $|\Theta|$ as the cardinality of a set $\Theta$, $\Delta\Theta$ as the distribution over set $\Theta$. Transpose of vector $v$ is denoted by $v^T$.
When we take the logarithm of a vector, we mean taking the logarithm of the vector entry-wise.

\subsection{World Signal Structure}

Let $\theta$ denote a possible state of the objective world, representing an underlying reason for a social or natural phenomenon; all possible states constitute a finite state set $\Theta$. The real state is the actual fact, denoted by $\theta_r,\theta_r\in\Theta$, which in this paper is predefined and static over time. Though the actual state of the world remains unobservable to the individuals, agents in the network keep making repeated noisy observations about $\theta_r$ over discrete time.
For each agent $i$, at the beginning of each time period $t\geq1$, a signal $\omega_{i,t}\in S_i$ drawn according to distribution $g_i\in \Delta S_i$ is observed by agent $i$, here $S_i$ denotes agent $i$'s signal space.
In this paper, all impossible signals are excluded from $S_i$, which means all entries in $g_i$ are strictly positive.
For the reason that signal distribution $g_i$ is the world signal's statistical distribution based on the underlying state $\theta_r$, we name $g_i$ the world signal structure. Besides, we assume that both $g_i$ and $S_i$ may vary across agents but keep static over time.

\subsection{Belief and Private Signal Structure}\label{sec:bapss}
An agent's belief is the extent to which it believes a state is real. An agent's beliefs on all possible states constitute a belief profile.
Belief profile is dynamic over time and varies across agents; thus we could use column vector $\mu_{i,t}$ to denote agent $i$'s belief profile.
Since belief profile is a distribution over state set $\Theta$, we have $\mu_{i,t}\in\Delta\Theta$. In addition, we use $\mu_{i,t}(\theta_m)$ to denote agent $i$'s belief on a specific state $\theta_m$ at time $t$.

Agent $i$'s private signal structure is its statistical knowledge about the signals with respect to all possible states. In the non-Bayesian social learning model, the private signal structure is used to make Bayesian inference. The private signal structure can be represented by a $|S_{i}|\times |\Theta|$ signal-state likelihood mapping 
$$L_{i}=\left[\ell_i^1,\ell_i^2,\cdots,\ell_i^m,\cdots \right].$$
The $m$-th column entry $\ell_i^m$ of $L_{i}$ is the possibility all signals attributed to $\theta_m$ from the view of agent $i$.
For private signal structure, it must satisfy that $\ell_i^m\in\Delta S_{i}$ for all $m$, which also indicates that the transpose of the matrix $L_{i}$ is stochastic.

Since signal space $S_{i}$ is a local property and state space $\Theta$ is defined globally, private signal structures of different agents may have different numbers of rows, but share the same number of columns.

When $\ell_i^r = g_i^T$, which is an ideal assumption indicating that each agent possesses the accurate signal knowledge, it is the scenario that has been discussed in \cite{jadbabaie2012non}. Our work extends this ideal assumption and allows for the private signal structure bias.
\begin{asspt}
	\label{asspt:imperfect}
	Each agent's knowledge about the statistical model of the real state may be imperfect, which means for each agent's private signal structure, it may satisfy that $\ell_i^r \ne g_i^T.$
\end{asspt}

In real social networks, noises, which ruins the perfectness of private signal structure, could be easily introduced into private signal structure by experience, personality, mental status and other unknown endogenous and exogenous factors. Thus Assumption \ref{asspt:imperfect} can significantly extend the application scenario of non-Bayesian social learning model, especially in describing the belief dynamics of human and animal social networks.

\subsection{Agent's Information Sources}
One key issue in agent $i$'s self-renewal process is about its information sources, including what it knows and what it doesn't when a new time period $t+1$ arrives: 
(a) As a basic assumption of our research, agent $i$ has no idea of the world signal structure. 
(b) Since observations are independent, agent $i$ has no access to other agents' even his neighbors' signals.
(c) Agent $i$ forgets the history signals $\omega_{i,\tau}(\tau\le t)$ it observed and most of its history belief profiles $\mu_{i,\tau}(\tau<t)$, which is referred to the \textit{imperfect recall} principle in \cite{molavi2018theory}.
The information sources for the agent to update its belief profile in period $t+1$ include: (1) $\mu_{i,t}$, its belief profile at time $t$, (2) $\omega_{i,t+1}$, the signal observed at the beginning of the new time period, and (3) $\mu_{j,t}(j\in\mathcal{N}_i)$, its neighbors' belief profiles at time $t$.
Each agent in each new time step uses these information sources together with its private signal structure to form its new belief report.

\subsection{Learning Rules and Successful Learning}
The updating of all agent's beliefs over the network is a dynamic process.
Learning rule describes how an agent uses its information sources to update its belief in a new time step.
In \cite{qipeng2015distributed} it has been discussed that the order of Bayesian updating and information aggregation makes little difference. Therefore our model makes Bayesian updating first and takes a weighted averaging information aggregation rule in accordance with models in \cite{jadbabaie2012non} and \cite{jadbabaie2013information}. The weighted averaging information aggregation rule is a DeGroot-style rule satisfying \textit{monotonicity}, \textit{label neutrality}, and \textit{separability} mentioned in \cite{molavi2018theory}, it is a naive and the simplest way to aggregate information from others.

In the non-Bayesian social learning model, agent $i$'s belief updating formula can be written as
\begin{equation}\label{eq:learning_rule1}
	\begin{split}
		\mu_{i,t+1}(\theta_m)&=a_{ii}\mu_{i,t}(\theta_m)\frac{{\ell_i^m}(\omega_{i,t+1})}{d_{i,t}(\omega_{i,t+1})} \\
		& + \sum_{j\in \mathcal{N}_i}{a_{ij}{\mu_{j,t}}(\theta_m)}
	\end{split}
\end{equation}
for all $\theta_m\in\Theta$. For agent $i$, the first part on the right side of Eq.\eqref{eq:learning_rule1} is the Bayesian part of the belief update, which is a result of Bayesian inference based on the posterior belief in period $t$, signal it observed in period $t+1$, and private signal structure. The denominator
\begin{equation*}
	d_{i,t}(\omega_{i,t+1})=\sum_{m=1}^M\ell_i^m(\omega_{i,t+1})\mu_{i,t}(\theta_m)
\end{equation*}
is the one-step forecast, where $a_{ii}$ is the weight referred to the self-reliance. Additionally, the second part on the right side of Eq.\eqref{eq:learning_rule1} is a weighted average of the beliefs held by its neighbors, known as the social part.
$a_{ij}$ captures the weight that agent $i$ assigns to neighbor $j$, and it must satisfy that $a_{ii}+\sum_{j\in\mathcal{N}_i}a_{ij}=1$.

If we use $A=[a_{ij}]$ to denote the network's influence matrix and use $\mu_{t}(\theta_m)$ to denote a vector containing all agents' beliefs on state $\theta_m$ at time $t$, then the belief dynamics of the underlying network can be given by
\begin{equation}\label{eq:learning_rule2}
	\begin{split}
		\mu_{t+1}(\theta_m)=&A\mu_t(\theta_m) + diag\bigg( a_{11}\left[ \frac{\ell_1^m(\omega_{1,t+1})}{d_{1,t}(\omega_{1,t+1})}-1 \right], \\
		&a_{22}\left[ \frac{\ell_2^m(\omega_{2,t+1})}{d_{2,t}(\omega_{2,t+1})}-1 \right], \cdots, \\
		&a_{NN}\left[ \frac{\ell_N^m(\omega_{N,t+1})}{{d_{N,t}}(\omega_{N,t+1})}-1 \right] \bigg)\mu_t(\theta_m)
	\end{split}
\end{equation}
for all $\theta_m\in\Theta$.

Specifically, if we assign the same self-reliance to all agents, Eq.\eqref{eq:learning_rule1} can be rewritten as
\begin{equation}\label{eq:learning_rule3}
	\begin{split}
		{\mu_{i,t+1}}(\theta_m)&=\gamma {\mu_{i,t}}(\theta_m)\frac{\ell_i^m(\omega_{i,t+1})}{{d_{i,t}}(\omega_{i,t+1})} \\
		& + \sum_{j\in \mathcal{N}_i}a_{ij}{\mu_{j,t}}(\theta_m)
	\end{split}
\end{equation} for all $\theta_m\in\Theta$, where $\gamma$ is the common self-reliance. We note that Eq.\eqref{eq:learning_rule3} will be used to investigate learning speed in Sub-section \ref{sec:learning_speed}.

Ulteriorly, if each agent assigns the same weight on its neighbors' beliefs, Eq.\eqref{eq:learning_rule3} can be rewritten as
\begin{equation}\label{eq:learning_rule4}
	\begin{split}
		{\mu_{i,t+1}}(\theta_m)&=\gamma {\mu_{i,t}}(\theta_m)\frac{\ell_i^m(\omega_{i,t+1})}{{d_{i,t}}(\omega_{i,t+1})} \\
		& + \frac{1-\gamma }{\left| \mathcal{N}_i \right|}\sum_{j\in \mathcal{N}_i}{{\mu_{j,t}}(\theta_m)}
	\end{split}
\end{equation} for all $\theta_m\in\Theta$. We note that Eq.\eqref{eq:learning_rule4} is used in all belief dynamics simulations in this paper for simplicity.

We write $\mathbf{1}_x$ to denote a $|\Theta|$-dimension vector with all zero entries except for its $x$-th entry, which is equal to 1.
If the belief profile converges to $\mathbf{1}_r$ for every agent $i$ in the network, we say that the network could achieve successful learning, which means all agents in the network could collectively and eventually learn the real state of the objective world.

\subsection{The Binary Case}

Now let us consider the simplest case: Agent $i$ possesses a $2\times2$ private signal structure matrix, which means both of state set $\Theta$ and signal space $S_i$ are binary. In this work, this simplest case is called \textit{binary case} for short. It should be noted that all numerical simulations in this study are conducted in the binary case, but all our propositions could generalize to $|\Theta|\ge2$ and $|S_i|\ge2$ cases.

In the binary case, let the state set $\Theta = \{\theta_1,\theta_2\}$ and all agents share the same signal set $S_{i}=\{s_1,s_2\}$. We specify $s_1$ is the high signal and $s_2$ is the low signal \cite{easley2010networks}, which means $g_i(s_1)>1/2$ and $g_i(s_2)<1/2$. Let $\theta_1$ be the predefined real state, $\ell_i^1(s_1)=\alpha$ and $\ell_i^2(s_1)=\beta$ for all $i$, then the private signal structure in the binary case can be defined by
$$
 	L_i=\left[ \begin{matrix}
			\alpha   & \beta   \\
			1-\alpha & 1-\beta \\
		\end{matrix} \right].
$$

To some extent, the binary case is quite representative. First, the binary case could cover most of the decision-making scenarios, for state space and signal space of most decision-making problems are binary. Second, as discussed in \cite{jadbabaie2013information}, when the successful learning occurs, the learning process, in the long run, is only related to the wrong state whose signal structure has the minimum KL divergence to the real state, which is referred to \textit{the most indistinguishable alternative state}. Besides, the mathematical merit of the binary signal space is that the signal structure only has 2 degrees of freedom; thus we could demonstrate the property of a signal structure in the unit square $\alpha,\beta\in[0,1]$, which is referred to $\alpha\beta$ square for short. In contrast, when $|\Theta|>2$ or $|S_i|>2$, visualization of signal structure properties is far more complicated.

\subsection{Main Question}
\label{sec:mainResults}

In the case when each agent's private signal structure is perfect, which means $\ell_i^r=g_i^T$ for all $i$, the authors in \cite{jadbabaie2012non} give the conditions enabling successful learning.
Since our work is an extension of \cite{jadbabaie2012non}, firstly, we need some of these successful learning conditions to be our assumptions.

\begin{asspt}
	\label{asspt:positiveSelfReliance}
	All agents have strictly positive self-reliance, which means $a_{ii}>0$ for all $i$.
\end{asspt}

Assumption \ref{asspt:positiveSelfReliance} ensures all agents' signal structures act on social learning. If $a_{ii}=0$ for all ${i}$, the social network is a dynamic system without signal input and Bayesian inference, it will turn into a belief averaging network with naive learning rule.

\begin{asspt}
	\label{asspt:onePositiveInitialBelief}
	There exists at least one agent $i$ with a positive initial belief on the real state $\theta_r$, which means there exists at least one agent $i$ that satisfies $\mu_{i,0}(\theta_r)>0$.
\end{asspt}

Assumption \ref{asspt:onePositiveInitialBelief} is the necessary condition about agents' initial beliefs; without it, there will be no chance for agents in the network to assign a positive belief on real state via Bayesian inference.

With all the above three assumptions, the main question we are going to answer is as follows:
\begin{question}\label{ques:social_learning}
	What are the private signal structure conditions ensuring the successful learning over the network?
\end{question}

In the following section, we show that $\ell_i^r$, which is an agent's knowledge about the real state, plays a major role in individual decision-making.

\section{Individual learning}\label{sec:inividual_learning}

First, we need to look at an agent's intrinsic learning ability by disregarding the social influence. Therefore the following question should be answered first:
\begin{question}\label{ques:individual_learning}
	Without the information from its neighbors, what conditions could ensure an agent's successful learning?
\end{question}

By answering this question, the private signal structure properties that related to the long run result of repeated Bayesian belief renewal will be explored.

\subsection{Positive and Negative Signal Structures}

For an isolated agent, due to lack of information from the network, when a new time period $t+1$ begins, its external information source is only the newly arrived signal $\omega_{i,t+1}$. By removing the social part from the right-hand side of Eq.\eqref{eq:learning_rule1} and assigning $a_{ii}=1$, the belief updating rule of an isolated agent can be simplified as

\begin{equation}
	\label{eq:isolatedLearningRule}
	{\mu_{i,t+1}}(\theta_m)={\mu_{i,t}}(\theta_m)\frac{\ell_i^m(\omega_{i,t+1})}{{d_{i,t}}(\omega_{i,t+1})}
\end{equation}
for all $\theta_m\in\Theta$.

Before digging into the isolated agent's repeated self-renewal process described by Eq.\eqref{eq:isolatedLearningRule}, we introduce the definition of relative entropy (KL divergence) about two distributions.

\begin{defn}
	Given two discrete probability distributions $p$, $q$ with identical supports, the \textbf{relative entropy} of $q$ with respect to $p$ is
	$$D(p||q)=\sum_{j}p_j\log\frac{p_j}{q_j}.$$
\end{defn}

With the above definition, for any state $\theta_m\in\Theta$, let
$$
	h_i(g_i, \theta_m)= D(g_i||\ell_i^m),
$$
then $h_i(g_i, \theta_m)$ could be used to measure the agent's $i$'s \textit{expected information content} per observation in favor of the hypothesis that the underlying state is $\theta_m$. 
In addition, let
\begin{equation}\label{eq:higmr}
	\begin{split}
		h_i^g(\theta_m,\theta_r) &= h_i(g_i, \theta_r) - h_i(g_i, \theta_m) \\
		&= \sum_{s\in S_i}g_i(s)\log\frac{\ell_i^m(s)}{\ell_i^r(s)},
	\end{split}
\end{equation}
then $h_i^g(\theta_m,\theta_r)$ is a measure of the \textit{expected information content} per observation, in favor of (the hypothesis that the underlying state is) ${\theta_r}$ other than ${\theta_m}$. Notice that $h_i^g(\theta_m,\theta_r) = - h_i^g(\theta_r,\theta_m)$.

\begin{figure}
	\centering
	\includegraphics[width=0.35\textwidth]{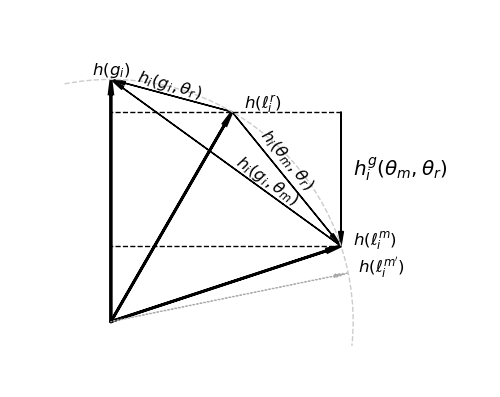}
	\caption{$h_i^g(\theta_m,\theta_r)$ could be understood as the "projection" of $h_i(\theta_m, \theta_r)$ onto the entropy of world signal structure $h(g_i)$.}
	\label{fig:projection}
\end{figure}

In Fig.\ref{fig:projection}, entropies and relative entropies are vectorized $h_i^g(\theta_m, \theta_r)$ into the unit circle\footnote{Generally speaking, vectorization of entropies and relative entropies into the unit circle is of no meaning; Fig.\ref{fig:projection} provides a way to understand the meaning of $h_i^g(\theta_m,\theta_r)$ from the vector space perspective.}.
Fig.\ref{fig:projection} illustrates the relationship between $h_i^g(\theta_m,\theta_r)$ and $h_i(\theta_m, \theta_r)$: $h_i^g(\theta_m, \theta_r)$ could be understood as the "projection" of $h_i(\theta_m, \theta_r)$ onto the entropy of world signal structure $h(g_i)$. If $h_i^g(\theta_m, \theta_r)$ has the same direction with $h(g_i)$, $h_i^g(\theta_m, \theta_r)$ is positive; otherwise, $h_i^g(\theta_m, \theta_r)$ is negative.

With the definition of $h_i^g(\theta_m, \theta_r)$, the condition that an isolated agent should satisfy to achieve successful learning by learning rule Eq.\eqref{eq:isolatedLearningRule} is presented in Proposition \ref{THM:INDIV_RADICAL}.

\begin{theorem}
	\label{THM:INDIV_RADICAL}
	Suppose that agent $i$ has a positive initial belief on state $\theta_r$, and its private signal structure $L_{i}$ satisfies
	\begin{equation}\label{eq:inProbabilityCondition}
		\begin{split}
			h_i^g(\theta_m,\theta_r) < 0
		\end{split}
	\end{equation}
	for all $m\ne r$, then, agent $i$ learns the state $\theta_r$, \textbf{in probability}, by the belief updating rule given by Eq.\eqref{eq:isolatedLearningRule}, which means $\mu_{i,t}\xrightarrow[t\to\infty]{p}\mathbf{1}_r$.

	Proof: \textnormal{See appendix.}
\end{theorem}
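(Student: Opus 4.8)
The plan is to track the log-belief-ratios rather than the beliefs themselves, because the data-dependent normalizer $d_{i,t}$ is intractable on its own but cancels cleanly in a ratio. First I would dispose of the trivial states: any $\theta_m$ with $\mu_{i,0}(\theta_m)=0$ stays at $\mu_{i,t}(\theta_m)=0$ for all $t$ by the multiplicative form of Eq.~\eqref{eq:isolatedLearningRule}, so it needs no attention. For a state $\theta_m\ne\theta_r$ with $\mu_{i,0}(\theta_m)>0$, the hypothesis $\mu_{i,0}(\theta_r)>0$ together with the strictly positive likelihoods $\ell_i^r$ guarantees $\mu_{i,t}(\theta_r)>0$ for every $t$, so $\varphi_{i,t}(\theta_m):=\log\frac{\mu_{i,t}(\theta_m)}{\mu_{i,t}(\theta_r)}$ is well defined. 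Dividing the update for $\theta_m$ by the update for $\theta_r$ in Eq.~\eqref{eq:isolatedLearningRule} makes $d_{i,t}(\omega_{i,t+1})$ disappear and leaves the additive recursion $\varphi_{i,t+1}(\theta_m)=\varphi_{i,t}(\theta_m)+\log\frac{\ell_i^m(\omega_{i,t+1})}{\ell_i^r(\omega_{i,t+1})}$.

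Next I would unroll this to $\varphi_{i,t}(\theta_m)=\varphi_{i,0}(\theta_m)+\sum_{\tau=1}^{t}X_\tau^m$ with $X_\tau^m:=\log\frac{\ell_i^m(\omega_{i,\tau})}{\ell_i^r(\omega_{i,\tau})}$. Since the signals $\omega_{i,\tau}$ are i.i.d. draws from $g_i$ over a finite signal space with strictly positive likelihoods, the $X_\tau^m$ are i.i.d. and bounded, with common mean $\mathbb{E}_{g_i}[X_\tau^m]=\sum_{s\in S_i}g_i(s)\log\frac{\ell_i^m(s)}{\ell_i^r(s)}=h_i^g(\theta_m,\theta_r)$, which is strictly negative by hypothesis~\eqref{eq:inProbabilityCondition}. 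The weak law of large numbers gives $\frac{1}{t}\sum_{\tau=1}^{t}X_\tau^m\xrightarrow{p}h_i^g(\theta_m,\theta_r)<0$, so the probability that $\varphi_{i,t}(\theta_m)$ exceeds any fixed threshold tends to $0$; that is, $\varphi_{i,t}(\theta_m)\xrightarrow{p}-\infty$, equivalently $\frac{\mu_{i,t}(\theta_m)}{\mu_{i,t}(\theta_r)}\xrightarrow{p}0$.

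Finally I would return to the belief profile through the identity $\mu_{i,t}(\theta_r)=\big(1+\sum_{m\ne r}e^{\varphi_{i,t}(\theta_m)}\big)^{-1}$, which follows from $\varphi_{i,t}(\theta_r)=0$ and $\sum_m\mu_{i,t}(\theta_m)=1$. Because there are only finitely many states, a union bound over $m\ne r$ turns the individual convergences $e^{\varphi_{i,t}(\theta_m)}\xrightarrow{p}0$ into $\sum_{m\ne r}e^{\varphi_{i,t}(\theta_m)}\xrightarrow{p}0$, whence $\mu_{i,t}(\theta_r)\xrightarrow{p}1$ and $\mu_{i,t}(\theta_m)\xrightarrow{p}0$ for all $m\ne r$, i.e. $\mu_{i,t}\xrightarrow{p}\mathbf{1}_r$.

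The conceptual work is entirely in the first paragraph: recognizing that the right object is the log-ratio, in which the awkward forecast $d_{i,t}$ cancels and the dynamics collapse to a random walk with i.i.d. increments whose drift is exactly $h_i^g(\theta_m,\theta_r)$. After that the argument is a routine invocation of the law of large numbers, and the only point needing care is the bookkeeping that keeps $\mu_{i,t}(\theta_r)$ positive so the ratio stays finite. I would note that, since the increments are bounded, feeding the same sum through the \emph{strong} law would already yield almost-sure convergence here; the statement is presumably phrased only ``in probability'' to foreground the contrast with the stronger hypothesis driving the almost-sure result of Proposition~\ref{THM:INDIV_CONSERV}.
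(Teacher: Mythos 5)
Your proposal is correct and follows essentially the same route as the paper's proof: divide out $d_{i,t}$ in the belief ratio, telescope to an i.i.d.\ sum of log-likelihood ratios whose mean is $h_i^g(\theta_m,\theta_r)<0$, and apply the weak law of large numbers (the paper phrases the same step via the signal counts $z_n\approx t\,g_i(s_n)$), with your write-up in fact tighter than the paper's, since it makes explicit the zero-prior states, the positivity keeping the ratio finite, and the union-bound passage from ratio convergence back to $\mu_{i,t}\xrightarrow{p}\mathbf{1}_r$, all of which the paper leaves implicit. Your closing remark is also accurate: because the increments are i.i.d.\ with strictly negative finite mean, the strong law already yields $\mu_{i,t}\xrightarrow[t\to\infty]{a.s.}\mathbf{1}_r$ for an isolated agent under hypothesis \eqref{eq:inProbabilityCondition} alone, so the ``in probability'' statement understates what holds in the isolated case, and the stronger condition $k_i^g(\theta_m,\theta_r)<0$ of Proposition \ref{THM:INDIV_CONSERV} earns its keep only through the submartingale machinery (Lemma \ref{lemma:1}) needed for the network results.
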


Proposition \ref{THM:INDIV_RADICAL} reveals that, by repeated signal observation and application of the Bayesian inference, an agent can learn the underlying state if the underlying state's corresponding column $\ell_i^r$ (in its private signal structure) has the minimum KL divergence from the world signal structure. In other words, the isolated agent will learn the state whose corresponding column best describes the observed signals. A similar case is also mentioned in Section 5.3.1 in \cite{bernardo2009bayesian}.

The bold words "in probability" in Proposition \ref{THM:INDIV_RADICAL} indicates the type of convergence. In particular, Eq.\eqref{eq:inProbabilityCondition} ensures a weak convergence. Another point to note is that if $h_i^g(\theta_m,\theta_r)>0$ for some $m\ne r$, there must exist an $r^\prime\ne r$ where $h_i^g(\theta_{m^\prime},\theta_{r^\prime})<0$ for all $m^\prime\ne r^\prime$. That is to say, an agent must learn a state, no matter whether the state is the underlying state or not.

Then we have the definitions of positive and negative signal structure:
\begin{defn}
	Agent $i$ and its private signal structure $L_i$ are
	\begin{enumerate}[label=\alph*)]
		\item \textbf{positive} if $h_i^g(\theta_m,\theta_r)<0$ for all $m\ne r.$
		\item \textbf{negative} if there exists $\hat m(\hat m\ne r)$ such that $h_i^g(\theta_{\hat m},\theta_r)>0$.
	\end{enumerate}
	\label{def:2}
\end{defn}

For a specific agent, in its signal structure, if the column concerning the real state has the minimum KL divergence to the world signal structure, it is positive. Otherwise, if the column with the minimum KL divergence to the world signal structure is not the $r$-th column, it is negative. For individual learning, positive agents could learn the real state independently, while isolated negative agents always learn a wrong state.

In \cite{jadbabaie2012non}, the authors mention an observational equivalent concept: If there exists a state $\theta_m(m\ne r)$ with $\ell_m=\ell_r$, then $\theta_r$ and $\theta_m$ are observational equivalent for agent $i$. 
Under Assumption \ref{asspt:imperfect}, the observational equivalent concept also needs to be further extended. 
That is because $\ell_i^m=\ell_i^r$ is not the only solution for equation $h_i^g(\theta_m,\theta_r)=0$; instead, all the solutions of $h_i^g(\theta_m,\theta_r)=0$ would cause the observational equivalent problem. 

Observational equivalent problem is an extreme case and could be easily eliminated by social collaboration. Therefore, if not mentioned particularly, this article will not discuss the observational equivalent case, that means Definition \ref{def:2} and the following definitions and propositions will not cover the special case that $h_i^g(\theta_m,\theta_r)=0$ for some $m\ne r$.

\begin{figure}
	\centering
	\includegraphics[width=0.32\textwidth]{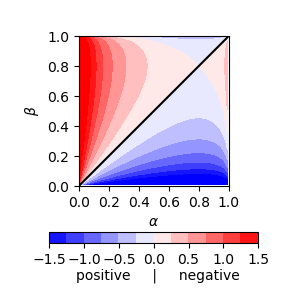}
	\caption{$h_i^g(\ell_i^2, \ell_i^1)$ in $\alpha\beta$ square, $g_i=[0.8, 0.2]$. The two blue parts satisfy the condition in Eq.\eqref{eq:inProbabilityCondition} and belong to the positive signal structure region; the two red parts belong to the negative signal structure region.}
	\label{HG_AB}
\end{figure}

In the binary case, since $\theta_1$ is the real state, $m=2$ is the only case for $m\ne r$. Thus if $h_i^g(\theta_2, \theta_1)<0$, the agent and its signal structure are positive, while if $h_i^g(\theta_2, \theta_1)>0$, the agent and its signal structure are negative. To clarify positive and negative signal structures, $h_i^g(\theta_2, \theta_1)$ in $\alpha\beta$ square when $g_i=[0.8, 0.2]$ is plotted in color map Fig.\ref{HG_AB}.

Since $h_i^g(\ell_i^m, \ell_i^r) = - h_i^g(\ell_i^r, \ell_i^m)$, color map of $h_i^g(\ell_i^m, \ell_i^r)$ is antisymmetric about the axis $\alpha=\beta$. Each of positive region and negative region covers $1/2$ area of the square and consists of a major part and a minor part. The four parts are divided by two observational equivalent lines, one is the straight line $\alpha=\beta$, and the other is the curve line connecting point $(0,1)$ and point $(1,0)$ which passes through the point $(0.8,0.8)$. 
These two lines are perpendicular to each other at point $(0.8,0.8)$, and they are the solution of $h_i^g(\theta_m,\theta_r)=0$, which means signal structures on these two lines may cause the observational equivalent problem. The world signal structure $\alpha=0.8$ passes through the two part of the positive region.

Positive and negative signal structures could also be explained by Fig.\ref{fig:projection}: For a specific signal structure, if $h_i^g(\theta_m,\theta_r)$ has the opposite direction with $h(g_i)$ for all $m\ne r$, the signal structure is positive, or else it is negative. Positive signal structure means negative $h_i^g(\theta_m,\theta_r)$, which also indicates that $h(\ell_i^r)$ and $h(g_i)$ are closer than $h(\ell_i^m)$ and $h(g_i)$.

\subsection{Conservative and Radical Signal Structures}

While Eq.\eqref{eq:inProbabilityCondition} ensures a weak (in probability thus in distribution) convergence, our results show signal structures with stronger conditions could ensure a stronger convergence via learning rule described by Eq.\eqref{eq:isolatedLearningRule}.

\begin{theorem}\label{THM:INDIV_CONSERV}
	Suppose that agent $i$ has positive initial belief on state $\theta_r$, and the private signal structure property $k_i^g(\theta_m, \theta_r)$ defined by
	\begin{equation}\label{eq:kigmr}
		k_i^g(\theta_m, \theta_r)=\log\sum_{s\in S_i} g_{i}(s)\frac{\ell_i^m(s)}{\ell_i^r(s)}
	\end{equation}
	satisfies
	\begin{equation}\label{eq:almostSureCondition}
		k_i^g(\theta_m, \theta_r)<0
	\end{equation}
	for all $m\ne r$, then, agent $i$ learns the state $\theta_r$, \textbf{almost surely}, by the learning rule described in Eq.\eqref{eq:isolatedLearningRule}, which means $\mu_{i,t}\xrightarrow[t\to\infty]{a.s.}\mathbf{1}_r$.

	Proof: \textnormal{See appendix.}
\end{theorem}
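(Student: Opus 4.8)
The plan is to track the pairwise belief ratios $\phi_t^{(m)} := \mu_{i,t}(\theta_m)/\mu_{i,t}(\theta_r)$ for each $m\ne r$ and show that each decays to zero almost surely. First I would record that Assumption~\ref{asspt:onePositiveInitialBelief} gives $\mu_{i,0}(\theta_r)>0$, and since the update rule Eq.\eqref{eq:isolatedLearningRule} multiplies $\mu_{i,t}(\theta_r)$ by the strictly positive factor $\ell_i^r(\omega_{i,t+1})/d_{i,t}(\omega_{i,t+1})$, we have $\mu_{i,t}(\theta_r)>0$ for all $t$; hence each $\phi_t^{(m)}$ is finite and well defined throughout. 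The crucial algebraic observation is that when we form the ratio, the common one-step forecast denominator $d_{i,t}(\omega_{i,t+1})$ cancels, leaving
\begin{equation*}
	\phi_{t+1}^{(m)} = \phi_t^{(m)}\,\frac{\ell_i^m(\omega_{i,t+1})}{\ell_i^r(\omega_{i,t+1})}.
\end{equation*}
Iterating, $\phi_t^{(m)} = \phi_0^{(m)}\prod_{\tau=1}^t X_\tau^{(m)}$, where $X_\tau^{(m)} := \ell_i^m(\omega_{i,\tau})/\ell_i^r(\omega_{i,\tau})$ are i.i.d. positive random variables because the signals $\omega_{i,\tau}$ are i.i.d. with law $g_i$. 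This step is the heart of the argument: it turns the nonlinear, self-normalizing recursion into a plain multiplicative random walk.

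Next I would compute the first moment. By independence and the definition Eq.\eqref{eq:kigmr},
\begin{equation*}
	E\big[X_\tau^{(m)}\big] = \sum_{s\in S_i} g_i(s)\frac{\ell_i^m(s)}{\ell_i^r(s)} = \exp\big(k_i^g(\theta_m,\theta_r)\big),
\end{equation*}
so that $E[\phi_t^{(m)}] = \phi_0^{(m)}\exp\!\big(t\,k_i^g(\theta_m,\theta_r)\big)$. The hypothesis Eq.\eqref{eq:almostSureCondition}, namely $k_i^g(\theta_m,\theta_r)<0$, makes this a geometrically summable sequence. One sees here why the present condition is strictly stronger than Eq.\eqref{eq:inProbabilityCondition}: by Jensen's inequality $h_i^g(\theta_m,\theta_r) = E[\log X^{(m)}] \le \log E[X^{(m)}] = k_i^g(\theta_m,\theta_r)$.

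From here I would finish by Markov's inequality: $P(\phi_t^{(m)}>\varepsilon) \le \phi_0^{(m)}\varepsilon^{-1}\exp\!\big(t\,k_i^g(\theta_m,\theta_r)\big)$, whose sum over $t$ is finite, so Borel--Cantelli gives $\limsup_t \phi_t^{(m)} \le \varepsilon$ almost surely for every $\varepsilon>0$, hence $\phi_t^{(m)}\to 0$ a.s. (Equivalently, $\phi_t^{(m)}$ is a nonnegative supermartingale, so it converges a.s. to some $\phi_\infty^{(m)}\ge 0$, and Fatou's lemma forces $E[\phi_\infty^{(m)}] \le \liminf_t E[\phi_t^{(m)}] = 0$, i.e. $\phi_\infty^{(m)}=0$ a.s.) Since $\Theta$ is finite and $\sum_m \mu_{i,t}(\theta_m)=1$, we have $\mu_{i,t}(\theta_r) = \big(1+\sum_{m\ne r}\phi_t^{(m)}\big)^{-1}$ and $\mu_{i,t}(\theta_m)=\phi_t^{(m)}\mu_{i,t}(\theta_r)$. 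Intersecting the finitely many almost-sure events $\{\phi_t^{(m)}\to 0\}$, $m\ne r$, yields $\mu_{i,t}(\theta_r)\to 1$ and $\mu_{i,t}(\theta_m)\to 0$, i.e. $\mu_{i,t}\xrightarrow[t\to\infty]{a.s.}\mathbf{1}_r$.

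The main obstacle is conceptual and lies entirely in the first step: recognizing that the self-normalizing denominator $d_{i,t}$ disappears in the pairwise ratio, which reduces the dynamics to an i.i.d.\ product whose expected value is controlled exactly by $k_i^g(\theta_m,\theta_r)$. Everything afterwards is a standard moment-and-Borel--Cantelli argument. The only secondary points needing care are the preservation of $\mu_{i,t}(\theta_r)>0$ for all $t$ (so the ratios stay well defined) and the observation that a finite intersection of almost-sure events remains almost sure, which is what lets the per-state convergence assemble into convergence of the full belief vector.
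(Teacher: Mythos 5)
Your proof is correct, but it takes a genuinely different route from the paper's. The paper proves Proposition \ref{THM:INDIV_CONSERV} with a martingale argument on the belief itself: Lemma \ref{lemma:1} uses Jensen's inequality on $1/x$ to show $\mathbb{E}\left[\left.\ell_i^r(\omega_{i,t+1})/d_{i,t}(\omega_{i,t+1})\right|\mathcal{F}_t\right]\ge 1$, so $\mu_{i,t}(\theta_r)$ is a bounded submartingale and converges a.s.; dominated convergence then forces that conditional expectation to $1$, Lemma \ref{lemma:2} upgrades this to $d_{i,t}(s)\to\ell_i^r(s)$ a.s., and Lemma \ref{lemma:3} converts forecast convergence into $\mu_{i,t}\to\mathbf{1}_r$. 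You instead exploit the cancellation of $d_{i,t}$ in the pairwise ratio $\phi_t^{(m)}=\mu_{i,t}(\theta_m)/\mu_{i,t}(\theta_r)$, reducing the dynamics to an i.i.d.\ multiplicative process whose mean is exactly $\exp\bigl(t\,k_i^g(\theta_m,\theta_r)\bigr)$, and finish by Markov plus Borel--Cantelli (or the supermartingale--Fatou variant); this is in fact the same ratio reduction the paper uses to prove Proposition \ref{THM:INDIV_RADICAL}, upgraded from a law-of-large-numbers argument to a first-moment argument. Your route is shorter, self-contained, and quantitatively sharper: the geometric decay rate $e^{t\,k_i^g}$ makes visible why $|k_i^g|$ governs the learning rate in Proposition \ref{THM:LEARNINGRATE}. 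What the paper's heavier machinery buys is reusability: the ratio cancellation is destroyed by the social averaging term in Eq.\eqref{eq:learning_rule1}, whereas Lemmas \ref{lemma:1}--\ref{lemma:3} are exactly what the proof of the network result, Proposition \ref{THM:NETWORKALMOSTSURE}, plugs into after establishing that $v\mu_t(\theta_r)$ is a submartingale. One point both arguments leave implicit, and which you might state explicitly, is that condition \eqref{eq:almostSureCondition} forces $\ell_i^r(s)>0$ for every $s\in S_i$ (otherwise $k_i^g$ is not finite), which is what keeps your ratios, and the paper's $d_{i,t}$ denominators, well defined.
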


Proposition \ref{THM:INDIV_CONSERV} is the almost sure convergence version of Proposition \ref{THM:INDIV_RADICAL}.
Since almost sure convergence indicates convergence in probability, the condition in Eq.\eqref{eq:almostSureCondition} is stronger than the condition in Eq.\eqref{eq:inProbabilityCondition}. More specifically, with Eq.\eqref{eq:higmr}, Eq.\eqref{eq:kigmr}, and convexity of the logarithmic function, by Jenson's inequality, we have
$$
	\log\sum_{s\in S_i} g_{i}(s)\frac{\ell_i^m(s)}{\ell_i^r(s)} > \sum_{s\in S_i}g_i(s)\log\frac{\ell_i^m(s)}{\ell_i^r(s)},
$$
which indicates Eq.\eqref{eq:almostSureCondition} is stronger than Eq.\eqref{eq:inProbabilityCondition}.

We must emphasize that Proposition \ref{THM:INDIV_RADICAL} and Proposition \ref{THM:INDIV_CONSERV} answer Question \ref{ques:individual_learning} in different senses:
Proposition \ref{THM:INDIV_RADICAL} converging in probability and Proposition \ref{THM:INDIV_CONSERV} converging almost surely.
These two types of convergence play an important role in our following definitions and propositions.

By Proposition \ref{THM:INDIV_CONSERV} positive agents could be classified into two types, we name them conservative and radical.
\begin{defn}
	Positive agent and its private signal structure $L_i$ are
	\begin{enumerate}[label=\alph*)]
		\item \textbf{conservative} if $k_i^g(\theta_m, \theta_r)<0$ for all $m\ne r$.
		\item \textbf{radical} if there exists $\hat m(\hat m\ne r)$ such that $k_i^g(\theta_{\hat m},\theta_r)>0$.
	\end{enumerate}
\end{defn}

\begin{figure}
	\centering
	\includegraphics[width=0.32\textwidth]{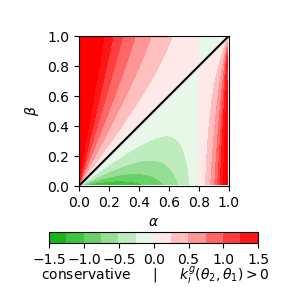}
	\caption{$k_i^g(\ell_i^2, \ell_i^1)$ in $\alpha\beta$ square, $g_i=[0.8, 0.2]$. The two green parts satisfy the condition in Eq.\eqref{eq:almostSureCondition} and belong to the conservative region.}
	\label{fig:KG_AB}
\end{figure}

Fig.\ref{fig:KG_AB} is the color map of $k_i^g(\theta_2, \theta_1)$ in the binary case ($g_i=[0.8, 0.2]$), which shows that the conservative region is between two intersecting lines. One of these two lines is still the line $\alpha=\beta$, part of the boundary of positive and negative regions; the other line is $\alpha=0.8$, the junction between conservative and radical regions. Notice that the perfect signal structures (with $\ell_i^r = g_i^T$) are right at the junction between radical and conservative regions; this means that if an agent has the perfect world signal structure, its signal structure is neither conservative nor radical.

\begin{figure}
	\centering
	\hfill
	\subfloat[$g_i=(0.8,0.2)$.\label{fig:INFORMATIVENESSa}]{%
		\includegraphics[width=0.48\linewidth]{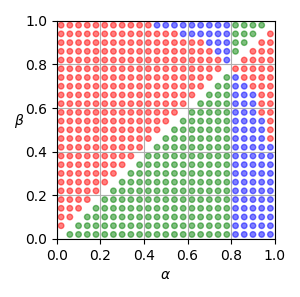}
	}
	\hfill
	\subfloat[$g_i=(0.6,0.4)$.\label{fig:INFORMATIVENESSb}]{%
		\includegraphics[width=0.48\linewidth]{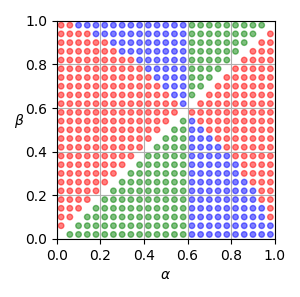}
	}
	\hfill
	\caption{Conservative(green), radical(blue) and negative(red) region in $\alpha\beta$ square with $25\times25$ sample points from $(0.02, 0.02)$ to $(0.98, 0.98)$, with different $g_i$ and informativeness of world signal structure. Especially, Fig.\ref{fig:INFORMATIVENESSa} could be recognized as a qualitative combination of Fig.\ref{HG_AB} and Fig.\ref{fig:KG_AB}.}
	\label{fig:INFORMATIVENESS}
\end{figure}

Fig.\ref{fig:INFORMATIVENESS} marks conservative, radical and negative signal structures in $\alpha\beta$ square with 625 sample points. The positive region is divided into conservative and radical regions by line $\alpha=g_i(s_1)$. By comparing the case when $g_i=[0.8, 0.2]$ (Fig.\ref{fig:INFORMATIVENESSa}, higher $h(g_i)$ means world signals are more informative) with the case when $g_i=[0.6, 0.4]$ (Fig.\ref{fig:INFORMATIVENESSb}, lower $h(g_i)$ means world signals are less informative), it could be noticed that the radical region (blue) is fatter when the world signal structure is less informative, which usually lead to a greater opportunity for an agent's signal structure to locate in the radical region.

Since $\ell_i^r \ne g_i^T$, both conservative and radical agents would not rule out the wrong states as much as the informativeness of the signal accordingly. Specifically, for most of the cases, agents tend to accept the real state when a high signal is observed ($\alpha>\beta$); therefore, the conservative agents always underestimate the informativeness of high signals (the major green part in Fig.\ref{fig:INFORMATIVENESS}), while radical agents always overestimate the informativeness of high signals (the major blue part in Fig.\ref{fig:INFORMATIVENESS}). However, if agents tend to reject the real state when a high signal is observed ($\alpha<\beta$), conservative agents always overestimate the informativeness of high signal (the minor green part in and Fig.\ref{fig:INFORMATIVENESS}), while radical agents always underestimate the informativeness of high signal (the minor blue part in Fig.\ref{fig:INFORMATIVENESS}). 

Semantically, conservative means be more cautious and less arbitrary in decision-making. The reason why we use the word conservative is that the line $\alpha=\beta$ is closer to the conservative region than to the radical region, which means, for conservative agents, knowledge difference between real state and wrong states is less than that of radical agents. It enables more fault tolerance for conservative agents to make decisions.

\section{Social Learning}\label{sec:social_learning}
Now let us switch back to the network context, where the learning rule is represented by Eq.\eqref{eq:learning_rule1}. To guarantee that information can flow from an arbitrary agent in the network to others, we first assume the connectivity of the network in our work.
\begin{asspt}
	\label{asspt:connectivity}
	The underlying social network is strongly connected.
\end{asspt}
This connectivity assumption together with Assumption \ref{asspt:onePositiveInitialBelief} make sure that all agents have strictly positive belief on real state after finite time steps.
The mathematical merit of strong connectivity is that it guarantees that the influence matrix $A$ of the underlying network is irreducible.

\subsection{Social Learning of Conservative Agents}

In a strongly connected network, our results show the social learning version of Proposition \ref{THM:INDIV_CONSERV} holds, which is also the answer to Question \ref{ques:social_learning}.

\begin{theorem}
	\label{THM:NETWORKALMOSTSURE}
	Suppose that all agents' private signal structures are conservative and follow the learning rule described in Eq.\eqref{eq:learning_rule1}, then all agents in the social network learn the real state of the world almost surely, that is, $\mu_{i,t}\xrightarrow[t\to\infty]{a.s.}\mathbf{1}_r$ for all $i$.

	Proof: \textnormal{See appendix.}
\end{theorem}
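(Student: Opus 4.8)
The plan is to prove the stronger statement that for every wrong state $\theta_m$ with $m\neq r$ the beliefs $\mu_{i,t}(\theta_m)$ vanish almost surely for all $i$; since $\sum_m\mu_{i,t}(\theta_m)=1$ and $\Theta$ is finite, this forces $\mu_{i,t}\to\mathbf{1}_r$ almost surely for every agent. Throughout I fix one such $m$ and work with the left Perron eigenvector $v$ of the influence matrix $A$, which exists with strictly positive entries and $v^TA=v^T$ because Assumption \ref{asspt:connectivity} makes $A$ irreducible. I also invoke the remark following Assumption \ref{asspt:connectivity}: Assumptions \ref{asspt:positiveSelfReliance}, \ref{asspt:onePositiveInitialBelief} and \ref{asspt:connectivity} together guarantee that after finitely many steps every $\mu_{i,t}(\theta_r)$ is strictly positive, so the belief ratios $w_{i,t}=\mu_{i,t}(\theta_m)/\mu_{i,t}(\theta_r)$ are well defined from some time on.

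The heart of the argument is a network analogue of the multiplicative supermartingale used to prove Proposition \ref{THM:INDIV_CONSERV}. Aggregating Eq.\eqref{eq:learning_rule1} against $v$ and using $v^TA=v^T$, the Perron-weighted wrong-state mass $\Sigma_t=\sum_i v_i\mu_{i,t}(\theta_m)$ updates so that the social part contributes exactly $\sum_i v_i(1-a_{ii})\mu_{i,t}(\theta_m)$ while the Bayesian part contributes $\sum_i v_i a_{ii}\mu_{i,t}(\theta_m)\ell_i^m(\omega_{i,t+1})/d_{i,t}(\omega_{i,t+1})$. Bounding the one-step forecast from below by $d_{i,t}(\omega)\ge\ell_i^r(\omega)\mu_{i,t}(\theta_r)$ replaces the awkward denominator, giving $\mu_{i,t}(\theta_m)\ell_i^m(\omega)/d_{i,t}(\omega)\le w_{i,t}\,\ell_i^m(\omega)/\ell_i^r(\omega)$. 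Taking the conditional expectation over the independent draw $\omega_{i,t+1}\sim g_i$ then turns the Bayesian factor into $\sum_s g_i(s)\ell_i^m(s)/\ell_i^r(s)=e^{k_i^g(\theta_m,\theta_r)}$. This is precisely where conservatism enters: by Eq.\eqref{eq:almostSureCondition} each $e^{k_i^g(\theta_m,\theta_r)}<1$, so every per-agent factor $a_{ii}e^{k_i^g(\theta_m,\theta_r)}+(1-a_{ii})=1-a_{ii}\bigl(1-e^{k_i^g(\theta_m,\theta_r)}\bigr)$ is strictly less than $1$ (here Assumption \ref{asspt:positiveSelfReliance}, $a_{ii}>0$, is essential), and hence so is the convex combination $\kappa=\sum_i v_i\bigl[a_{ii}e^{k_i^g(\theta_m,\theta_r)}+(1-a_{ii})\bigr]<1$.

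With this contraction factor $\kappa<1$ as the engine, I would close the argument as in Proposition \ref{THM:INDIV_CONSERV}: the dominating quantity, divided by $\kappa^{t}$, is a nonnegative supermartingale, so it converges almost surely to a finite limit, and since $\kappa^{t}\to0$ the wrong-state beliefs are squeezed to $0$ almost surely for all agents simultaneously. Strong connectivity is what lets a single scalar $\kappa$ govern the whole network: positivity of every $v_i$ ensures that the vanishing of the aggregate $\Sigma_t$ drags down each coordinate $\mu_{i,t}(\theta_m)$, while irreducibility prevents any agent from being shielded from the Bayesian evidence accumulated elsewhere.

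The step I expect to be the main obstacle is controlling the belief-dependent denominator $d_{i,t}$ and the cross-agent comparison it forces. The clean factorisation that made Proposition \ref{THM:INDIV_CONSERV} a product of i.i.d.\ terms is lost here: after the lower bound on $d_{i,t}$ the Bayesian factor multiplies $w_{i,t}=\mu_{i,t}(\theta_m)/\mu_{i,t}(\theta_r)$ rather than $\mu_{i,t}(\theta_m)$ itself, so the dominating process is naturally phrased through $\max_i w_{i,t}$, whereas the quantity with a clean conditional expectation is the \emph{linear} aggregate $\Sigma_t$. Reconciling the two requires showing, via strong connectivity, that this maximum stays comparable to $\Sigma_t$ so that the contraction by $\kappa$ genuinely applies step after step; equivalently, one must rule out that social averaging repeatedly rescues the wrong state in some coordinate faster than the conservative Bayesian updates erode it. Making this quantitative, for instance by iterating the one-step bound over a window long enough for information to traverse the graph, is the technical core of the proof.
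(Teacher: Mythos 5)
Your proposal has a genuine gap, and it sits exactly where you flag it: the contraction factor $\kappa<1$ that you call the ``engine'' of the argument is never actually established, and as stated it is false. After the lower bound $d_{i,t}(\omega)\ge \ell_i^r(\omega)\mu_{i,t}(\theta_r)$, the conditional expectation of the Bayesian term is bounded by $a_{ii}\,e^{k_i^g(\theta_m,\theta_r)}\,w_{i,t}$ with $w_{i,t}=\mu_{i,t}(\theta_m)/\mu_{i,t}(\theta_r)$, and $w_{i,t}$ exceeds $\mu_{i,t}(\theta_m)$ by the factor $1/\mu_{i,t}(\theta_r)$, which is unbounded unless the real-state beliefs are uniformly bounded away from zero --- which is the very conclusion being proved. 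Concretely, $\mathbb{E}\left[\left.\ell_i^m(\omega_{i,t+1})/d_{i,t}(\omega_{i,t+1})\right|\mathcal{F}_t\right]=\sum_s g_i(s)\ell_i^m(s)/d_{i,t}(s)$ can exceed $1$ when $\mu_{i,t}(\theta_r)$ is small even for a conservative agent, so $\Sigma_t=\sum_i v_i\mu_{i,t}(\theta_m)$ need not be a supermartingale at all times, let alone contract by a fixed $\kappa$; dividing by $\kappa^t$ therefore does not produce a nonnegative supermartingale, and the ``squeeze'' step has no foundation. Your proposed repair --- comparing $\max_i w_{i,t}$ to $\Sigma_t$ by iterating over a window long enough for information to traverse the graph --- is left entirely unexecuted, and it is precisely where the difficulty lives, since social averaging can keep reinjecting wrong-state mass into an agent whose own real-state belief is small. (Note also that your model for the individual case is the i.i.d.-product proof of Proposition \ref{THM:INDIV_RADICAL}; the almost-sure Proposition \ref{THM:INDIV_CONSERV} is not proved by a $\kappa^t$-normalized supermartingale either.)

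The paper avoids this circularity by tracking the \emph{real} state rather than the wrong ones. Lemma \ref{lemma:1} shows that conservatism, Eq.\eqref{eq:almostSureCondition}, gives $\mathbb{E}\left[\left.d_{i,t}(\omega_{i,t+1})/\ell_i^r(\omega_{i,t+1})\right|\mathcal{F}_t\right]=\sum_m\mu_{i,t}(\theta_m)\,e^{k_i^g(\theta_m,\theta_r)}\le 1$ for \emph{arbitrary} prior, whence by Jensen $\mathbb{E}\left[\left.\ell_i^r/d_{i,t}\right|\mathcal{F}_t\right]\ge 1$ --- a one-sided drift condition that needs no lower bound on $\mu_{i,t}(\theta_r)$. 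Then Lemma \ref{lemma:4} makes $v\mu_t(\theta_r)$ a bounded submartingale (here is where irreducibility and $v_i>0$ enter, matching your intuition), so it converges almost surely; convergence forces the drift term $\sum_i v_i a_{ii}\mu_{i,t}(\theta_r)\left(\mathbb{E}\left[\left.\ell_i^r/d_{i,t}\right|\mathcal{F}_t\right]-1\right)$ to vanish, positivity of each factor gives $\mathbb{E}\left[\left.\ell_i^r/d_{i,t}\right|\mathcal{F}_t\right]\to 1$ per agent, and Lemmas \ref{lemma:2} and \ref{lemma:3} convert forecast convergence $d_{i,t}(s)\to\ell_i^r(s)$ into $\mu_{i,t}\to\mathbf{1}_r$. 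This qualitative martingale route yields no rate, whereas your contraction scheme, if it could be closed, would give geometric decay (in the spirit of Proposition \ref{THM:LEARNINGRATE}); but closing it requires the uniform positivity of real-state beliefs that the paper's argument is specifically engineered not to need.
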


Proposition \ref{THM:NETWORKALMOSTSURE} gives sufficient conditions for a network to achieve asymptotic learning: Distributed almost sure convergence could lead to the almost sure convergence over a network via non-Bayesian social learning.
Group of conservative individuals could reach consensus and learn the real state collectively without knowing the accurate signal knowledge about the real state. Underestimating or overestimating the informativeness of high or low signal will not ruin the learning of network, for conservative agents could always accumulate correct information over time.

Compared with the asymptotic learning conditions given in \cite{jadbabaie2012non}, the condition that no other state is observationally equivalent to the real state from all agents is no longer needed. The reason for this is that conservative condition in Proposition \ref{THM:NETWORKALMOSTSURE} is different from and slightly stronger than the globally identifiable condition given in \cite{jadbabaie2012non} to adapt our extended model Assumption \ref{asspt:imperfect}. All agents' private signal structures are conservative means that no state is observational equivalent to the real state from the view of each agent. The real state is locally identifiable, which implies that the real state is globally identifiable.

In the binary case, we set up our simulation environment using static undirected ER random network, in which all agents share the same signal space and same world signal structure, which means $g_{i}=g_{j}$ and $S_{i}=S_{j}$ for any $i\ne j$.
In a network of 100 nodes, we set the density of network to 0.1 and ensure its strong connectivity.
Learning rule in Eq.\eqref{eq:learning_rule4} is adopted, which means that all agents share the same self-reliance $\gamma$ and each agent assigns the same weight to all its neighbors when performing information aggregation.
Agents' initial beliefs assigned to $\theta_1$ and $\theta_2$ are uniformly distributed in the interval of $[0,1]$. We assume that $\theta_1$ is the real state, signal $s_1$ appears with the possibility of 80\%, and $s_2$ with 20\%, thus, $g_{i}=[0.8,0.2]$ for all $i$.

\begin{table}
	\caption{Constant signal structures used in simulations.}\label{tb:cs}
	\centering
	\begin{tabular}{*{6}{c}}
		\hline
		& type         & $\alpha$ & $\beta$ & $h_i^g(\theta_m,\theta_r)$ & $k_i^g(\theta_m,\theta_r)$ \\
		\hline
		$L_{(1)}$ & Conservative & 0.6      & 0.4     & $-0.2433$                  & $-0.1823$                  \\
		$L_{(2)}$ & Radical      & 0.9      & 0.1     & $-1.3183$                  & $0.6360$                   \\
		$L_{(3)}$ & Negative     & 0.4      & 0.6     & $0.2433$                   & $0.2877$                   \\
		\hline
	\end{tabular}
\end{table}

To examine the learning process of a network composed of conservative agents, we set all agents' signal structure $L_{(1)}$ in Table.\ref{tb:cs}. The learning process of the network in 500 steps is shown in Fig.\ref{fig:conlearning}.

\begin{figure}
	\begin{center}
		\subfloat[$\gamma=0.1.$\label{fig:conlearning_0.1}]{%
			\includegraphics[width=0.9\linewidth]{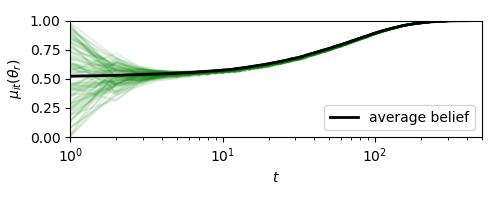}
		}
		\hfill
		\subfloat[$\gamma=0.9.$\label{fig:conlearning_0.9}]{%
			\includegraphics[width=0.9\linewidth]{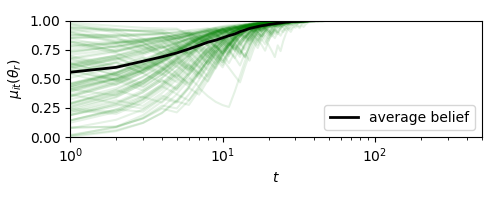}
		}
	\end{center}
	\caption{Belief dynamics of ER random network composed of 100 conservative agents (all agents are with $L_{(1)}$). Each green line represents an agent's belief dynamics. With different self-reliances, conservative agents could reach consensus and convergence at different speeds respectively.}
	\label{fig:conlearning}
\end{figure}

Fig.\ref{fig:conlearning} is the direct demonstration of Proposition \ref{THM:NETWORKALMOSTSURE}. The green lines, which is in accordance with the color of the conservative region in Fig.\ref{fig:INFORMATIVENESSa}, represent agents' belief dynamics on the real state over the time axis. All green lines and average belief are approximating 1, which means the conservative agents could collectively learn the underlying state. In general, while lower self-reliance results in faster consensus (see Fig.\ref{fig:conlearning_0.1}), higher self-reliance results in faster convergence (see Fig.\ref{fig:conlearning_0.9}).

\subsection{Social Learning of Radical Agents}

Proposition \ref{THM:NETWORKALMOSTSURE} is the social version of Proposition \ref{THM:INDIV_CONSERV}; to some extent, its convergence is predictable. Also, it is easy to infer that negative agents always do harm to social learning. In this sub-section, we will focus on the racial agents; a counter-intuitive phenomenon will be shown: Social version of Proposition \ref{THM:INDIV_RADICAL} does not hold, which means network composed of radical agents could not achieve successful learning.

\begin{theorem}
	\label{THM:NETWORKUNCERTAIN}
	Suppose that all agents' private signal structures are radical and follow the learning rule described in Eq.\eqref{eq:learning_rule1}, then all agents remain uncertain almost surely.

	Proof: \textnormal{See appendix.}
\end{theorem}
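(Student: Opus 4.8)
The plan is to track the wrong-state beliefs through their one-step conditional mean and to show that the consensus profile $\mathbf{1}_r$ is a \emph{repeller}, so the network cannot settle on the truth. Fix a wrong state $\theta_m$ ($m\neq r$) witnessing the radical property, i.e. $k_i^g(\theta_m,\theta_r)>0$ (in the binary case $m=2$ automatically; in general I would either assume a common witnessing state or run the argument for each agent's witness). Let $v>0$ be the left Perron eigenvector of the influence matrix $A$, which exists and is strictly positive because Assumption~\ref{asspt:connectivity} makes $A$ irreducible; normalize so that $v^TA=v^T$ and $\sum_i v_i=1$. Writing $u_{t,i}=\mu_{i,t}(\theta_m)$ and conditioning on the history $\mathcal{F}_t$ of all beliefs through time $t$, the only randomness in Eq.~\eqref{eq:learning_rule1} is the fresh signal $\omega_{i,t+1}\sim g_i$, so
\begin{equation*}
E[\mu_{i,t+1}(\theta_m)\mid\mathcal{F}_t]=a_{ii}\,\rho_{i,t}\,u_{t,i}+\sum_{j\in\mathcal{N}_i}a_{ij}u_{t,j},\qquad \rho_{i,t}=\sum_{s\in S_i}g_i(s)\frac{\ell_i^m(s)}{d_{i,t}(s)}.
\end{equation*}
Setting $S_t=\sum_i v_i u_{t,i}=v^Tu_t$ and using $v^TA=v^T$ collapses this into the key identity
\begin{equation*}
E[S_{t+1}\mid\mathcal{F}_t]=S_t+\sum_i v_i a_{ii}(\rho_{i,t}-1)u_{t,i}.
\end{equation*}

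The next step is to evaluate the drift near the truth. As all beliefs approach $\mathbf{1}_r$ we have $d_{i,t}(s)\to\ell_i^r(s)$, hence by Eq.~\eqref{eq:kigmr} $\rho_{i,t}\to\sum_{s}g_i(s)\ell_i^m(s)/\ell_i^r(s)=e^{k_i^g(\theta_m,\theta_r)}>1$ under the radical hypothesis. Thus there are a neighborhood $U$ of the consensus profile $\mathbf{1}_r$ (every agent believing $\theta_r$) and a $\delta>0$ on which $\rho_{i,t}-1\ge\delta$ for all $i$; since $a_{ii}\ge a_{\min}>0$ by Assumption~\ref{asspt:positiveSelfReliance}, the identity yields $E[S_{t+1}\mid\mathcal{F}_t]\ge(1+a_{\min}\delta)S_t$ whenever all beliefs lie in $U$. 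Equivalently, linearizing the conditional-mean map at $\mathbf{1}_r$ gives the nonnegative irreducible matrix $B=A+\mathrm{diag}\big(a_{ii}(e^{k_i^g(\theta_m,\theta_r)}-1)\big)$; pairing $v^TBw=\lambda v^Tw$ with its right Perron vector $w>0$ shows the Perron root is $\lambda=1+(v^TDw)/(v^Tw)>1$ with $D=\mathrm{diag}(a_{ii}(e^{k_i^g(\theta_m,\theta_r)}-1))$. Hence $\mathbf{1}_r$ is a linearly unstable fixed point of the mean dynamics: near the truth the weighted wrong-state mass $S_t$ carries a strictly positive multiplicative drift, which is the structural reason the radical network fails where the conservative one (Proposition~\ref{THM:INDIV_CONSERV}, where $e^{k_i^g}<1$) succeeds.

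To upgrade this in-mean instability to the almost-sure claim I would argue by contradiction: suppose the network learns on an event of positive probability, so there $\mu_{i,t}\to\mathbf{1}_r$ and $S_t\to0$, and the beliefs eventually stay in $U$ forever, after which $(1+a_{\min}\delta)^{-t}S_t$ is a nonnegative submartingale tending to $0$, contradicting that its expectation is nondecreasing and bounded below. \textbf{The main obstacle} is exactly that ``staying in $U$ forever'' is a tail event, not $\mathcal{F}_t$-adapted, and that the unconditional expectation of $S_t$ can be carried by rare excursions far from $\mathbf{1}_r$, where $\rho_{i,t}-1$ may be negative and the drift is uncontrolled; a purely in-expectation bookkeeping therefore washes out, since the near-truth expansion of order $S_t$ can be exactly cancelled by rare-excursion terms of the same order. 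I would resolve this by localizing with the stopping time $\nu=\inf\{t\ge n:\text{beliefs}\notin U\}$: on $[n,\nu)$ the event $\{\text{beliefs}\in U\}$ is $\mathcal{F}_t$-measurable, so $(1+a_{\min}\delta)^{-(t-n)}S_{t\wedge\nu}$ is a \emph{genuine} bounded submartingale, and optional stopping converts the repelling drift into a uniform positive lower bound preventing absorption at $0$, giving $\limsup_t S_t>0$ almost surely, i.e. the asserted uncertainty.

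Two structural points must be checked along the way. First, the drift only acts if $S_0>0$: if all agents start with zero belief on every wrong state, Eq.~\eqref{eq:learning_rule1} keeps it at zero and the network trivially ``learns,'' so I assume interior initial beliefs, whereupon Assumptions~\ref{asspt:onePositiveInitialBelief} and~\ref{asspt:connectivity} spread positive mass on $\theta_m$ to every agent in finitely many steps. Second, $\rho_{i,t}$ is uniformly bounded: since impossible signals are excluded, $d_{i,t}(s)\ge\min_{m'}\ell_i^{m'}(s)>0$, so $\rho_{i,t}\le\sum_s g_i(s)\ell_i^m(s)/\min_{m'}\ell_i^{m'}(s)<\infty$, which is what controls the excursion terms in the localization above. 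Finally, in the binary case smallness of $S_t$ is equivalent to every agent being near $\mathbf{1}_r$; for $|\Theta|>2$ I would instead track the total wrong-state mass $\sum_{m\neq r}v^Tu_t$ so that smallness of the aggregate again localizes all beliefs near the truth.
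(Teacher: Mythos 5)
Your drift computation is sound and is, in spirit, the same move the paper makes: the paper also argues by contradiction, evaluates the learning rule near $\mathbf{1}_r$ (after first asserting that learning forces consensus), and computes the one-step drift of $\mu_{i,t}(\theta_r)$; its key quantity $f^\prime(0)=1-\sum_s g_i(s)\ell_i^{\hat m}(s)/\ell_i^r(s)<0$ is exactly your $\rho_{i,t}\to e^{k_i^g(\theta_m,\theta_r)}>1$ read from the real-state side, and it concludes that $\mu_{i,t}(\theta_r)$ is a supermartingale on an interval $(1-c,1)$, contradicting convergence to $1$. Your Perron-weighted identity $\mathbb{E}[S_{t+1}\mid\mathcal{F}_t]=S_t+\sum_i v_i a_{ii}(\rho_{i,t}-1)u_{t,i}$ is in fact cleaner than the paper's consensus reduction, since it is exact without assuming agents share a belief profile at finite times.

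However, your upgrade from in-mean instability to the almost-sure claim has a genuine gap, and it is precisely the obstacle you yourself flag. Work out what your localization actually yields: with $M_t=(1+a_{\min}\delta)^{-((t\wedge\nu)-n)}S_{t\wedge\nu}$ a bounded submartingale, optional stopping gives $\mathbb{E}[M_\infty\mid\mathcal{F}_n]\ge S_n$; since $M_\infty=0$ on $\{\nu=\infty\}$ (the discount factor alone kills it, as $S$ is bounded) and $M_\infty\le 1$ otherwise, you obtain only $P(\nu<\infty\mid\mathcal{F}_n)\ge S_n$. On the putative learning event $S_n\to 0$, so this exit-probability bound degenerates and produces no contradiction --- there is no ``uniform positive lower bound preventing absorption at $0$.'' More fundamentally, a conditional-mean expansion $\mathbb{E}[S_{t+1}\mid\mathcal{F}_t]\ge(1+\delta)S_t$ is compatible with $S_t\to 0$ almost surely whenever the log-growth is negative (take a factor of $3$ or $1/4$ with equal probability), and radicality lives exactly in that regime: $\mathbb{E}[\ell_i^m(\omega)/\ell_i^r(\omega)]=e^{k_i^g}>1$ while $\mathbb{E}[\log(\ell_i^m(\omega)/\ell_i^r(\omega))]=h_i^g<0$. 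A decisive sanity check: applied to a single agent (take $A=I$, $v=1$; your drift identity is unchanged), your argument would ``prove'' that an isolated radical agent remains uncertain, contradicting Proposition \ref{THM:INDIV_RADICAL} --- indeed by the strong law of large numbers the log-likelihood ratio of an isolated radical agent drifts to $-\infty$, so such an agent does learn. Hence any completion of your scheme must use the additive social term $\sum_j a_{ij}u_{t,j}$ in an essential way, beyond first moments, which your bookkeeping never does; the paper instead pins its contradiction on the supermartingale property of $\mu_{i,t}(\theta_r)$ under the consensus ansatz. A secondary point: for $|\Theta|>2$ your fixed witness $\theta_m$ need not satisfy $k_i^g(\theta_m,\theta_r)>0$ for every $i$ simultaneously, so the sign of $\rho_{i,t}-1$ inside your aggregate can vary across agents; the paper's per-agent choice of $\hat m$ avoids this, and your ``run the argument for each agent's witness'' does not combine into a single drift inequality for one $S_t$.
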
 

Proposition \ref{THM:NETWORKUNCERTAIN} reveals that distributed convergence in probability could not ensure a collective convergence.
If all agents are with radical signal structure $L_{(2)}$ in Table.\ref{tb:cs}, in the same simulation environment, the learning process of 100-nodes network is shown in Fig.\ref{fig:ralearning}. 
It could be found that although the network has no chance to learn a wrong state, the learning result may remain uncertain. 
That is somewhat counter-intuitive because all radical agents could perform success learning independently. This counter-intuitiveness can be viewed as a kind of information cascades \cite{banerjee1992simple, bikhchandani1992theory, rosas2017technological}: Neighbors' misleading observations are overestimated or underestimated by themselves before involving in each agent's subsequent decision-making. 
For each radical agent, it can not completely rule out these misleading information by the accumulation of new observations.

\begin{figure}
	\begin{center}
		\subfloat[$\gamma=0.1.$\label{fig:ralearning_0.1}]{%
			\includegraphics[width=0.9\linewidth]{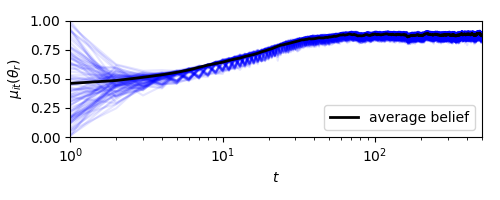}
		}
		\hfill
		\subfloat[$\gamma=0.9.$\label{fig:ralearning_0.9}]{%
			\includegraphics[width=0.9\linewidth]{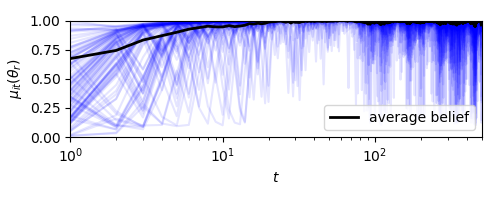}
		}
	\end{center}
	\caption{Belief dynamics of ER random network composed of 100 radical agents (all agents are with $L_{(2)}$). Whenever the common self-reliance is high or low, radical agents could not learn the real state collectively and completely.}
	\label{fig:ralearning}
\end{figure}

Particularly, in Fig.\ref{fig:ralearning}, when learning with lower self-reliance, the agents reach a noisy consensus (see Fig.\ref{fig:ralearning_0.1}). 
In addition, learning with higher common self-reliance is an approximation to the individual learning case; thus agents' average belief on the real state keeps at a high value after about 300 steps (see Fig.\ref{fig:ralearning_0.9}). 
However, higher common self-reliance prevents the network from consensus. In Fig.\ref{fig:ralearning_0.9}, it can be seen that there are always a few numbers of agents in network assign low beliefs on real state after 100 steps.

What's interesting is that in \cite{nedic2017fast} the results show that if each agent uses a log-linear information aggregation rule, then radical agents could learn the real state collectively. That means radical agents could avoid information cascades by taking a more rational information aggregation approach.

\subsection{Learning speed}\label{sec:learning_speed}

In this part, we will discuss the learning speed under the model Assumption \ref{asspt:imperfect}. As a parameter describing the learning process, Learning speed represents how fast agents' beliefs converge to $\mathbf{1}_r$. Learning rate discussed in \cite{jadbabaie2013information} is used to measure the learning speed of a network. Here we introduce the definitions of belief uncertainty and learning rate in \cite{jadbabaie2013information}.

\begin{defn}
	\textbf{Belief uncertainty} of a network at time $t$ is given by
	\begin{equation}\label{eq:beliefUncertainty}
		e_t= \frac{1}{2}\sum_{i=1}^{n}||\mu_{i,t}-\mathbf{1}_r||_1,
	\end{equation}
	and \textbf{learning rate} is defined by
	\begin{equation*}
		\lambda = \liminf_{t\rightarrow\infty}\frac{1}{t}|\log{e_t}|.
	\end{equation*}
\end{defn}

With the above definitions of belief uncertainty and learning rate, our proposition about learning speed is as follows:

\begin{theorem}\label{THM:LEARNINGRATE}
	Suppose all agents in the network are conservative and share the same self-reliance $\gamma$, which means the learning rule is described by Eq.\eqref{eq:learning_rule3}, then the learning rate
	\begin{equation*}
		\lambda\leq\gamma\min_{m\ne r}\sum_{i=1}^{n}v_i |k_i^g(\theta_{m},\theta_r)|
	\end{equation*}
	almost surely.

	Proof: \textnormal{See appendix.}
\end{theorem}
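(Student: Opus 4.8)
The plan is to track the total belief mass that the network places on the wrong states, reduce it to a single scalar steered by the eigenvector centrality of the influence matrix, and then read off the decay rate from the Bayesian one-step forecast.

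First I would fix notation and invoke what is already proved. Let $v$ be the normalized left Perron eigenvector of the influence matrix, i.e. $v^TA=v^T$, $v_i>0$, $\sum_i v_i=1$; this is well defined because the strong-connectivity assumption (Assumption~\ref{asspt:connectivity}) makes $A$ irreducible and stochastic. Since every agent is conservative, Proposition~\ref{THM:NETWORKALMOSTSURE} gives $\mu_{i,t}\xrightarrow{a.s.}\mathbf{1}_r$ for all $i$, so the one-step forecasts converge, $d_{i,t}(\cdot)\to\ell_i^r(\cdot)$ almost surely, and beliefs reach consensus. Because $\mu_{i,t}(\theta_r)=1-\sum_{m\ne r}\mu_{i,t}(\theta_m)$, the belief uncertainty collapses to $e_t=\sum_{m\ne r}\sum_i\mu_{i,t}(\theta_m)$, so for every wrong state $m$, with $\psi_t^{(m)}:=\sum_i v_i\mu_{i,t}(\theta_m)$ one has $e_t\ge(\max_i v_i)^{-1}\psi_t^{(m)}$. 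Hence $\limsup_t\frac1t\log e_t\ge\limsup_t\frac1t\log\psi_t^{(m)}$, and since $\lambda=-\limsup_t\frac1t\log e_t$, it suffices to lower-bound the decay of each $\psi_t^{(m)}$ and then take the best $m$.

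Next I would derive a scalar recursion. Multiplying Eq.~\eqref{eq:learning_rule3} by $v_i$ and summing over $i$, the social part collapses: from $v^TA=v^T$ with $a_{ii}=\gamma$ we get $\sum_{i\ne j}v_i a_{ij}=(1-\gamma)v_j$, so $\psi_{t+1}^{(m)}=\gamma\sum_i v_i\mu_{i,t}(\theta_m)\frac{\ell_i^m(\omega_{i,t+1})}{d_{i,t}(\omega_{i,t+1})}+(1-\gamma)\psi_t^{(m)}$. Writing the agent weights $w_{i,t}=v_i\mu_{i,t}(\theta_m)/\psi_t^{(m)}$, the conditional mean of the multiplicative increment is $c_t:=\mathbb{E}[\psi_{t+1}^{(m)}\mid\mathcal{F}_t]/\psi_t^{(m)}=1-\gamma+\gamma\sum_i w_{i,t}\sum_{s}g_i(s)\frac{\ell_i^m(s)}{d_{i,t}(s)}$. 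Consensus forces $w_{i,t}\to v_i$, and $d_{i,t}\to\ell_i^r$ sends the inner sum to $\sum_s g_i(s)\ell_i^m(s)/\ell_i^r(s)=e^{k_i^g(\theta_m,\theta_r)}$, so $c_t\to\bar\rho_m:=1-\gamma+\gamma\sum_i v_i e^{k_i^g(\theta_m,\theta_r)}$. I would then define the normalized process $M_t=\psi_t^{(m)}\big/\prod_{s<t}c_s$, which is a nonnegative martingale and thus converges a.s. to some $M_\infty\ge0$. On $\{M_\infty>0\}$ we get $\frac1t\log\psi_t^{(m)}\to\log\bar\rho_m$, so the almost-sure decay rate of $\psi_t^{(m)}$ is exactly $-\log\bar\rho_m$. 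The claimed estimate then comes from a two-step Jensen argument: $\log(1-\gamma+\gamma y)\ge\gamma\log y$ and, by concavity of $\log$ over the weights $v$, $\log\sum_i v_i e^{k_i^g}\ge\sum_i v_i k_i^g(\theta_m,\theta_r)$, whence $-\log\bar\rho_m\le-\gamma\sum_i v_i k_i^g(\theta_m,\theta_r)=\gamma\sum_i v_i|k_i^g(\theta_m,\theta_r)|$ (using $k_i^g<0$ for conservative agents). Since this holds for every $m$, taking the minimum gives $\lambda\le\gamma\min_{m\ne r}\sum_i v_i|k_i^g(\theta_m,\theta_r)|$.

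The delicate step, and the one I expect to be the main obstacle, is establishing $M_\infty>0$ almost surely, i.e. that the true (quenched) rate coincides with the annealed rate $-\log\bar\rho_m$ rather than being strictly faster. The limiting martingale can degenerate to $0$ when the cross-sectional fluctuations of the random factors $\sum_i v_i\,\ell_i^m(\omega_{i,t+1})/\ell_i^r(\omega_{i,t+1})$ are too large—exactly what happens for a single agent, where the almost-sure rate is governed by $h_i^g$ rather than $k_i^g$. I would control this with a second-moment bound on $M_t$ that exploits the independence of the signals $\omega_{i,t+1}$ across agents; this is the same cross-sectional averaging that concentrates $\sum_i w_{i,t}\ell_i^m/\ell_i^r$ about $\bar\rho_m$ and is precisely why the relationship between learning speed and signal structure is only \emph{approximately} linear. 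The remaining points are routine: one must make the convergences $d_{i,t}\to\ell_i^r$ and $w_{i,t}\to v_i$ strong enough that $\prod_{s<t}c_s$ has the same exponential rate as $\bar\rho_m^{\,t}$, which follows by a Ces\`aro argument from $c_t\to\bar\rho_m$.
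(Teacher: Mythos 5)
Your reduction to the scalar process $\psi_t^{(m)}=\sum_i v_i\mu_{i,t}(\theta_m)$ and the normalized martingale $M_t=\psi_t^{(m)}/\prod_{s<t}c_s$ is a genuinely different route from the paper's, but the step you yourself flag as delicate is where the argument fails, and not for a repairable technical reason. Identifying the almost-sure decay exponent of $\psi_t^{(m)}$ with the annealed exponent $\log\bar\rho_m$ is exactly the quenched-versus-annealed problem for products of random matrices, and your own single-agent remark is a counterexample to the method rather than a boundary case: for $n=1$, $\gamma=1$ one has $\bar\rho_m=e^{k_i^g(\theta_m,\theta_r)}$ while $\frac1t\log\psi_t^{(m)}\to h_i^g(\theta_m,\theta_r)<k_i^g(\theta_m,\theta_r)$ almost surely, so $M_\infty=0$ almost surely. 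At fixed finite $n$ the situation does not qualitatively improve: the cross-sectional average $\sum_i w_{i,t}\,\ell_i^m(\omega_{i,t+1})/\ell_i^r(\omega_{i,t+1})$ over finitely many independent signals does not concentrate, the conditional second moment of the normalized increment stays bounded away from $1$ whenever the per-step factors have nondegenerate variance, so $\mathbb{E}[M_t^2]$ grows geometrically and the Kesten--Stigum-type second-moment bound you propose cannot deliver $M_\infty>0$; that concentration is an $n\to\infty$ phenomenon. The failure is in the direction that hurts you: to upper-bound $\lambda$ you need the \emph{lower} bound $\limsup_t\frac1t\log e_t\ge\gamma\sum_i v_ik_i^g(\theta_m,\theta_r)$, and if $M_\infty=0$ then $\psi_t^{(m)}$ may decay strictly faster than $\bar\rho_m^{\,t}$, after which your chain yields nothing. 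A second, independent gap: ``consensus forces $w_{i,t}\to v_i$'' does not follow from $\mu_{i,t}\to\mathbf{1}_r$; every coordinate $\mu_{i,t}(\theta_m)$, $m\ne r$, vanishes, and convergence on the simplex says nothing about the cross-agent ratios $\mu_{i,t}(\theta_m)/\mu_{j,t}(\theta_m)$, which are re-randomized each period by heterogeneous multiplicative shocks.

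The paper's appendix avoids all of this by a one-sided, pathwise argument: concavity of the logarithm applied to the convex combination in Eq.\eqref{eq:learning_rule3} gives $\log\mu_{i,t+1}(\theta_m)\ge\gamma\log\left(\ell_i^m(\omega_{i,t+1})/d_{i,t}(\omega_{i,t+1})\right)+\sum_j a_{ij}\log\mu_{j,t}(\theta_m)$; weighting by $v_i$, telescoping in $t$, invoking Lemma \ref{lemma:2} for $d_{i,t}\to\ell_i^r$, and applying the strong law of large numbers \emph{in time} (not across agents) yields $\limsup_t\frac1t\sum_i v_i\log\mu_{i,t}(\theta_m)\ge\gamma\sum_i v_ih_i^g(\theta_m,\theta_r)$, and $\log e_t\ge\max_{m\ne r}\sum_i v_i\log\mu_{i,t}(\theta_m)$ finishes. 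Because this is a pathwise lower bound, no nondegeneracy statement is ever needed --- but note that it naturally delivers the bound with $h_i^g$, i.e.\ $\lambda\le\gamma\min_{m\ne r}\sum_i v_i|h_i^g(\theta_m,\theta_r)|$; the paper's final passage from $h_i^g$ to $k_i^g$ invokes $\max_{m\ne r}\sum_i v_ih_i^g\ge\max_{m\ne r}\sum_i v_ik_i^g$, which is the reverse of Jensen's inequality ($h_i^g\le k_i^g$), so the $|k_i^g|$ form is strictly stronger than what the appendix rigorously supports. Your annealed computation is, in effect, an attempt to prove that stronger $|k|$ form directly, and your own analysis correctly identifies that it would require exactly the cross-sectional concentration that fails at finite $n$. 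So the proposal does not close; the salvageable core of a proof is the paper's pathwise Jensen argument rather than the martingale normalization.
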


Proposition \ref{THM:LEARNINGRATE} needs to be discussed when the network satisfies the conditions of Proposition \ref{THM:NETWORKALMOSTSURE}; otherwise, the existence of radical and negative agents might ruin social learning. Compared with Proposition 2 in \cite{jadbabaie2013information}, the relative entropy $h_i(g_i,\theta_m)$ is replaced by $|k_i^g(\theta_m,\theta_r)|$ in our Proposition \ref{THM:LEARNINGRATE}.
Since we have
$$
	0 < -k_i^g(\theta_m,\theta_r) < -h_i^g(\theta_m,\theta_r) < h_i(g_i, \theta_m),
$$
conservative agents ensure successful learning by lowering the learning rate.

In the same simulation environment (connected network with 100 nodes and density 0.1), fix $\gamma= 0.5$, for each conservative signal structure point $L^{(\alpha,\beta)}$ in Fig.\ref{fig:INFORMATIVENESSa}, set $L_i = L^{(\alpha,\beta)}$ for all node $i$ in network, after 50 steps' learning, fill the value of $|\log(e_{50})^{(\alpha,\beta)}|$ back into the $\alpha\beta$ square, then the relationship between signal structure and belief uncertainty is shown in Fig.\ref{fig:learning_speed_ab}.

\begin{figure}
	\centering
	\includegraphics[width=0.4\textwidth]{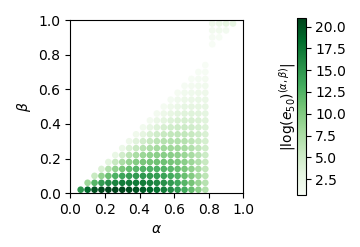}
	\caption{Relationship between conservative signal structure $L^{(\alpha,\beta)}$ and belief uncertainty $|\log(e_{50})^{(\alpha,\beta)}|$. Value of each point $(\alpha,\beta)$ in this colormap represents $|\log(e_{50})^{(\alpha,\beta)}|$ of a network composed of agents whose signal structures are all $L^{(\alpha,\beta)}$.}
	\label{fig:learning_speed_ab}
\end{figure}

\begin{figure}
	\centering
	\includegraphics[width=0.4\textwidth]{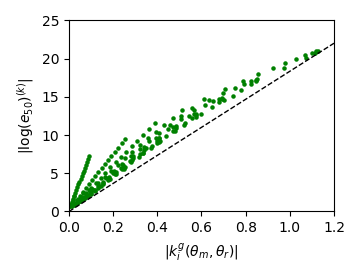}
	\caption{Relationship between $|k_i^g(\theta_2, \theta_1)|$ of conservative signal structure $L^{(\alpha,\beta)}$ and $|\log(e_{50})^{(k)}|$. The x-coordinate of each point is the $|k_i^g(\theta_2, \theta_1)|$ value of each point in Fig.\ref{fig:learning_speed_ab}, and y-coordinate is the corresponding $|\log(e_{50})^{(k)}|$. It appears an approximately linear relationship between $|k_i^g(\theta_2, \theta_1)|$ and $|\log(e_{50})^{(k)}|$.}
	\label{fig:learning_speed_k}
\end{figure}

By comparing Fig.\ref{fig:KG_AB} with Fig.\ref{fig:learning_speed_ab}, it could be noticed that green points in Fig.\ref{fig:learning_speed_ab} and green part of Fig.\ref{fig:KG_AB} are similar, which indicates that there is a close relationship between $|k_i^g(\theta_2, \theta_1)|$ and learning speed.

In addition, Fig.\ref{fig:learning_speed_k} plots $|k_i^g(\theta_2, \theta_1)|$ of each conservative signal structure $L^{(\alpha,\beta)}$ in Fig.\ref{fig:learning_speed_ab} and the corresponding $|\log(e_{50})^{(k)}|$ together. To some extent, it appears a linear relationship between $|k_i^g(\theta_2, \theta_1)|$ and $|\log(e_{50})^{(k)}|$. That is to say, when other conditions such as self-reliance level remain unchanged, the relationship between learning speed of network and level of $|k_i^g(\theta_{\hat m}, \theta_r)|$ among agents approximates to linear, where the $\theta_{\hat m}$ denotes the most indistinguishable alternative state from the view of the whole network. This linear relationship is an indirect demonstration of Proposition \ref{THM:LEARNINGRATE}.

\section{Consideration in actual Social Practices}\label{sec:discuss}

In this section, we first propose some fragilities of non-Bayesian social learning model from a mathematical perspective. 
Then we show the benefits that negative agents can acquire from social learning. 
At last, two adjusting strategies and three decision-making patterns are summarized to explain the collective decision-making mechanism in actual social practices from mathematical and social perspective respectively.

\subsection{Fragilities in Non-Bayesian Social Learning}
Based on mathematical results in Section \ref{sec:inividual_learning} and Section \ref{sec:social_learning}, the fragilities of social learning mechanism are mainly reflected in the three following aspects: 

a). The collective decision-making cannot improve learning speed, for higher social-reliance (lower self-reliance) usually results in lower learning speed. Particularly, for conservative agents and agents with perfect signal structures, social learning is slower than individual learning.

b). Successful individual learning could not ensure successful social learning, for the existence of radical agents might ruin the successful learning over the network. 

c). Agents with perfect signal structure are right at the junction of the conservative region and the radical region in $\alpha\beta$ square (see Fig.\ref{fig:INFORMATIVENESS}), which makes these agents easy to turn radical when endogenous noise is introduced into their private signal structure.

\subsection{Negative agents in Social Learning}

Since fragilities exist in non-Bayesian social learning model, why individuals tend to learn from their neighbors by joining the network in actual social practices? 
One explanation we can find within the frame of non-Bayesian social learning model is that, by calling the individuals together, the network may provide more opportunities for negative agents to learn the real state with the help of their conservative neighbors.

\begin{figure}
	\begin{center}
		\subfloat[$\gamma=0.1.$\label{fig:connglearning_0.1}]{%
			\includegraphics[width=0.9\linewidth]{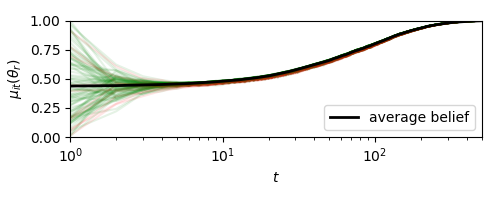}
		}
		\hfill
		\subfloat[$\gamma=0.5.$\label{fig:connglearning_0.5}]{%
			\includegraphics[width=0.9\linewidth]{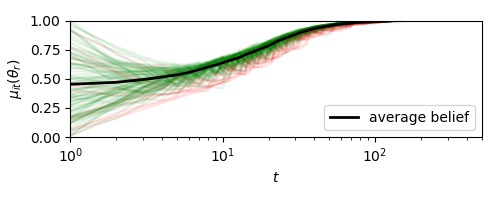}
		}
		\hfill
		\subfloat[$\gamma=0.9.$\label{fig:connglearning_0.9}]{%
			\includegraphics[width=0.9\linewidth]{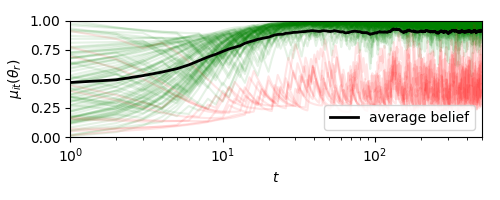}
		}
	\end{center}
	\caption{Belief dynamics of ER random network composed of 90 conservative agents (with $L_{(1)}$, green) and 10 negative agents (with $L_{(3)}$, red) at different self-reliances $\gamma$. Higher $\gamma$ means faster learning (Fig.\ref{fig:connglearning_0.5}) but might result in overconfidence (Fig.\ref{fig:connglearning_0.9}).}
	\label{fig:connglearning}
\end{figure}

Fig.\ref{fig:connglearning} shows the learning process of a network composed of 90 conservative agents (with $L_{(1)}$ in Table \ref{tb:cs}) and 10 radical agents (with $L_{(3)}$ in Table \ref{tb:cs}) in three different self-reliance values $\gamma$. 
When common $\gamma$ shifts from 0.1 to 0.5, the overwhelming conservative agents can help the negative agents to learn the real state (see Fig.\ref{fig:connglearning_0.1},\ref{fig:connglearning_0.5}). Particularly, as the $\gamma$ increases, the learning speed of the network increases.
However, if the $\gamma$ is high enough, the network remains uncertain (see Fig.\ref{fig:connglearning_0.9}). In such a case, high self-reliance lead to overconfidence, and when the overconfidence of negative agents occurs, limited help could be obtained from conservative agents, which further results in a failed learning.

\subsection{Adjustment Strategies}

Negative agents can get benefits from conservative neighbors in social learning; however, insufficient arguments are given to explain why the conservative and radical agents have motivations to join social learning. 

In fact, all three types of agents might exist in the actual social network. 
For each agent, as it does not know the world signal structure, it is also unaware of its private signal structure type. 
However, we find that, although the private signal structure type is unknown to each agent, it would be much easier for the entire network to achieve successful learning if its agents could adjust their private signal structure and self-reliance via some appropriate strategies. These strategies for each agent include:

a). \textit{Be more conservative}. Mathematically, "be more conservative" means each agent should try to reduce the difference between the two signal structure columns which point to the two most possible states (in actual social practice, these two states are very likely to be the real state and the most indistinguishable alternative state), especially when the KL divergences between its belief and neighbors' beliefs are too large. By doing this, both radical and negative agents' signal structures would have smaller $k_i^g (\theta_m, \theta_r)$ on the most indistinguishable alternative state $\theta_m$ (in the binary case, it is closer to $\alpha=\beta$). That is beneficial to both radical and negative agents: Radical agents might have an opportunity to become a conservative agent, and negative agents would be more likely to be persuaded by their conservative neighbors.

b). \textit{Avoid overconfidence}. In the non-Bayesian social learning model, the physical meaning of self-reliance is the self-confidence. Therefore, "avoid overconfidence" means an agent should assign enough weights to their neighbors' beliefs. By doing this, adequate and effective support to a negative agent could be provided by its conservative neighbors. As a result, the collaborative tightness of agents in a social network could improve the learning over the network. This strategy might slow down the learning speed, but it is beneficial to the long run learning result.

\subsection{Social Members' Decision-Making Patterns}

Above two strategies are based on our propositions, simulations, and inferences. To extend the implementation value of these two strategies from the perspective of actual social practices, we conclude three following decision-making patterns of social members:

Pattern A: \textit{Confident and Perfect/Radical}. This pattern matches the radical agents and agents with perfect signal structure. Social members with accurate knowledge (perfect signal structure) and those who dare to overestimate the informativeness of high signals (radical) usually keep strong self-confidence (high self-reliance) in actual social practice. As a form of social communication, they often neglect others' opinions (low social-reliances).

Pattern B: \textit{Cooperative and Conservative}. This pattern matches the conservative agents. Individuals with inaccurate or inadequate knowledge (imperfect signal structure) usually make decisions cautiously (conservatively) and communicate more with their neighbors (high social-reliances).

Pattern C: \textit{Dependent and Negative}. This pattern matches the negative agents. Individuals tend to be suspicious about its knowledge (signal structure) whenever others' opinions have great difference (great KL divergences) to its own. In other words, to negative agents, social pressure might result in weak self-confidence and knowledge suspicion. Therefore, spontaneous adjustments would further be taken to fill gaps between their and their neighbors' beliefs.

Above three decision-making patterns are consistent with our mathematical results and analyses. 
Furthermore, several canonical theoretical and empirical studies in the area of social science, such as self-efficacy \cite{bandura2010self}, social pressure \cite{asch1955opinions}, subjective norms \cite{ajzen1986prediction}, and leadership \cite{couzin2005effective} also provide observation and explanation on these decision-making patterns in the real society. More importantly, these patterns might be the essential components of collective decision-making in human and animal societies, specialized in bringing individual together, ensuring the robustness of the social network, enabling as many individuals as possible to survive each critical moment in both history and future.

\section{Conclusion}
In this paper, under the imperfect private signal structure assumption, agents in the social network are classified into three types: "conservative," "radical," and "negative".
The conservative agents always push a network to the correct decision, while both negative and radical agents are detrimental to the learning of the network. 
A network contains negative and/or radical agents may not be able to learn the real state of the world.
If conservative agents are overwhelming in a network, they could help the rest agents in learning the real state, thus making social learning more robust than individual learning. 
If each member could adjust its signal structure and self-reliance with appropriate strategies, it would be easier for the network to learn the real state synchronically.
In summary, the non-Bayesian social learning model is mathematically fragile and practically effective. The assumption of imperfect private signal structure, along with adjusting strategies and decision-making patterns we summarized could better reflect the learning mechanism of human and animal societies.

This paper studies the learning abilities of different signal structures mathematically. Most of the discussion is based on mathematical propositions and simulations, the details in decision-making patterns need to be verified empirically, and some model errors such as learning rule error are not fully taken into account. 
Thus, model extension researches such as the influence of different learning rules, the performance of different strategies, and empirical researches about decision-making patterns could be included in future works.

\section*{Appendix: Proofs}

First, we give proof of Proposition \ref{THM:INDIV_RADICAL}. Then follows three auxiliary lemmas for a specific agent, which are used to prove Propositions \ref{THM:INDIV_CONSERV} and \ref{THM:NETWORKALMOSTSURE}. After that, proof of Proposition \ref{THM:NETWORKUNCERTAIN} is given by contradiction. The proof of Proposition \ref{THM:LEARNINGRATE} is given in the last sub-section. Proofs of some lemmas and propositions follow similar approaches as those in \cite{jadbabaie2012non} and \cite{jadbabaie2013information}.

\subsection{Proof of Proposition \ref{THM:INDIV_RADICAL}}

\begin{proof}
	By evaluating the ratio of belief on any alternative state to the real state, we have
	\begin{equation*}
		\begin{split}
		\frac{\mu_{i,t}(\theta_m)}{\mu_{i,t}(\theta_r)}&=\frac{\mu_{i,0}(\theta_m)}{\mu_{i,0}(\theta_r)}\prod_{\hat t=1}^{t}\frac{\ell_i^m(\omega_{i,\hat t})}{\ell_i^r(\omega_{i,\hat t})} \\ 
		&=\frac{\mu_{i,0}(\theta_m)}{\mu_{i,0}(\theta_r)}\prod_{n=1}^{N}\left(\frac{\ell_i^m(s_n)}{\ell_i^r(s_n)}\right)^{z_n},
		\end{split}
	\end{equation*}
	in which $z_n$ denote the signal $s_n$ appeared $z_n$ times in $t$ steps. 

	Let random variable sequence $X^m_t=\mu_{i,t}(\theta_m)/\mu_{i,t}(\theta_r)$ and $\epsilon>0$, consider
	\begin{equation*}
		\begin{split}
			P(|X^m_t|\ge\epsilon) &= P\left(\frac{\mu_{i,0}(\theta_m)}{\mu_{i,0}(\theta_r)}\prod_{n=1}^{N}\left(\frac{\ell_i^m(s_n)}{\ell_i^r(s_n)}\right)^{z_n}\ge\epsilon\right) \\
			&= P\left(\sum_{n=1}^{N}z_n\log\frac{\ell_i^m(s_n)}{\ell_i^r(s_n)}\ge\log\frac{\epsilon\mu_{i,0}(\theta_r)}{\mu_{i,0}(\theta_m)}\right),
		\end{split}
	\end{equation*}
	since $z_n\xrightarrow[t\to\infty]{p}tg_i(s_n)$,
	\begin{equation*}
		\begin{split}
			&\lim_{t\rightarrow\infty} P(|X^m_t|\ge\epsilon) \\
			&= \lim_{t\rightarrow\infty} P\left(\sum_{n=1}^{N}tg_i(s_n)\log\frac{\ell_i^m(s_n)}{\ell_i^r(s_n)}\ge\log\frac{\epsilon\mu_{i,0}(\theta_r)}{\mu_{i,0}(\theta_m)}\right),
		\end{split}
	\end{equation*}
	then with Eq.\eqref{eq:inProbabilityCondition}, we have $\lim_{t\rightarrow\infty}P(|X^m_t|\ge\epsilon)=0$ for any $m\ne r$, which indicates that $\mu_{i,t}\xrightarrow[t\to\infty]{p}\mathbf{1}_r$.
\end{proof}

\subsection{Auxiliary lemmas for specific agent}

\begin{lemma}
	For any agent $i$'s private signal structure satisfies Eq.\eqref{eq:almostSureCondition}, we have $\mathbb{E}\left[ \left. \ell_i^r(\omega_{i,t+1})/d_{i,t} (\omega_{i,t+1}) \right|\mathcal{F}_t \right] > 1$ for arbitrary prior belief $ \mu_{i,t} $.
	\label{lemma:1}
\end{lemma}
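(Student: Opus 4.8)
The plan is to collapse the conditional expectation into a deterministic average over the signal space and then bound that average with a single use of Jensen's inequality. First I would exploit that the fresh signal $\omega_{i,t+1}$ is drawn from $g_i$ independently of the past, whereas $\mu_{i,t}$ — and hence the one-step forecast $d_{i,t}$ — is $\mathcal{F}_t$-measurable. The conditioning therefore only averages over the new draw, giving
$$
	\mathbb{E}\!\left[\left.\frac{\ell_i^r(\omega_{i,t+1})}{d_{i,t}(\omega_{i,t+1})}\right|\mathcal{F}_t\right]
	=\sum_{s\in S_i} g_i(s)\,\frac{\ell_i^r(s)}{\sum_{m}\ell_i^m(s)\,\mu_{i,t}(\theta_m)}.
$$
Since every $g_i(s)$ and every entry $\ell_i^m(s)$ is strictly positive (finiteness of the relative entropies forces $\ell_i^m(s)>0$ wherever $g_i(s)>0$), each denominator is strictly positive and the sum is well defined for every $\mu_{i,t}$.

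Next I would divide through by $\ell_i^r(s)$ and use $\sum_m \mu_{i,t}(\theta_m)=1$ to recognise each summand as $1/Y(s)$ with $Y(s)=\sum_m \mu_{i,t}(\theta_m)\,\ell_i^m(s)/\ell_i^r(s)$, so that the expectation is exactly $\mathbb{E}_{s\sim g_i}[1/Y(s)]$. Because $x\mapsto 1/x$ is convex on $(0,\infty)$, Jensen's inequality yields $\mathbb{E}_{s\sim g_i}[1/Y(s)]\ge 1/\mathbb{E}_{s\sim g_i}[Y(s)]$, which reduces the whole statement to proving that the $g_i$-average of $Y$ stays strictly below $1$.

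That average factorises over the states: interchanging the two sums and invoking the definition of $k_i^g$ in Eq.\eqref{eq:kigmr},
$$
	\mathbb{E}_{s\sim g_i}[Y(s)]
	=\sum_m \mu_{i,t}(\theta_m)\sum_{s\in S_i} g_i(s)\,\frac{\ell_i^m(s)}{\ell_i^r(s)}
	=\sum_m \mu_{i,t}(\theta_m)\,e^{k_i^g(\theta_m,\theta_r)}.
$$
The $m=r$ term equals $\mu_{i,t}(\theta_r)$ because $k_i^g(\theta_r,\theta_r)=0$, while the conservative hypothesis Eq.\eqref{eq:almostSureCondition} makes $e^{k_i^g(\theta_m,\theta_r)}<1$ for each $m\ne r$; hence $\mathbb{E}_{s\sim g_i}[Y(s)]\le\sum_m \mu_{i,t}(\theta_m)=1$, and the Jensen bound delivers the claim.

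The step to watch is strictness, and the clean observation is that it comes entirely from this last estimate rather than from Jensen — so only the weak form of Jensen is needed. As soon as $\mu_{i,t}$ puts positive mass on some wrong state $\theta_m$, the factor $e^{k_i^g(\theta_m,\theta_r)}<1$ drags $\mathbb{E}_{s\sim g_i}[Y(s)]$ strictly under $1$, and its reciprocal, hence the conditional expectation, strictly exceeds $1$. The sole boundary case is the point mass $\mu_{i,t}=\mathbf{1}_r$, where $Y\equiv1$ and the expectation equals $1$; since that case means the agent has already learned, the stated strict inequality is understood for every non-degenerate prior, which is all that the ensuing submartingale argument for Propositions \ref{THM:INDIV_CONSERV} and \ref{THM:NETWORKALMOSTSURE} requires.
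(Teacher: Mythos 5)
Your proof is correct and takes essentially the same route as the paper's: both rewrite the conditional expectation as a $g_i$-weighted sum over signals, apply Jensen's inequality to the convex map $x\mapsto 1/x$, and use Eq.\eqref{eq:almostSureCondition} to bound $\mathbb{E}\left[\left. d_{i,t}(\omega_{i,t+1})/\ell_i^r(\omega_{i,t+1}) \right|\mathcal{F}_t\right]$ by $1$. If anything, you are more careful than the paper on strictness: the paper asserts that bound is $<1$ for arbitrary $\mu_{i,t}$ (which fails at $\mu_{i,t}=\mathbf{1}_r$) and then concludes only $\ge 1$, whereas you correctly isolate the point mass $\mathbf{1}_r$ as the sole equality case, which is all the ensuing submartingale arguments require.
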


\begin{proof}
	The condition in Eq.\eqref{eq:almostSureCondition}, which is $$\sum_{s\in S_i} g_{i}(s)\frac{\ell_i^m(s)}{\ell_i^r(s)}<1$$ for $m\ne r$, means for arbitrary $\mu_{i,t}$ we have
	\begin{equation*}
		g_{i} \big[diag^{-1}[\ell_i^r]\ell_i^1, diag^{-1}[\ell_i^r]\ell_i^2, \cdots \big] \mu_{i,t} < 1,
	\end{equation*}
	where the left side
	\begin{equation*}
		\begin{split}
			& g_{i} \big[diag^{-1}[\ell_i^r]\ell_i^1, diag^{-1}[\ell_i^r]\ell_i^2, \cdots \big] \mu_{i,t} \\
			=& \sum_{s\in S_i} {g_{i}(s)\frac{d_{i,t} (s)}{\ell_i^r(s)}} \\
			=& \mathbb{E}\left[ \left. \frac{d_{i,t} (\omega_{t+1})}{\ell_i^r(\omega_{t+1})} \right|\mathcal{F}_t \right].
		\end{split}
	\end{equation*}

	Since $f(x)=1/x$ is a convex function, Jensen's inequality implies that
	$$
		\mathbb{E}\left[ \left. \frac{\ell_i^r(\omega_{i,t+1})}{d_{i,t} (\omega_{i,t+1})} \right|\mathcal{F}_t \right]\ge {{\left( \mathbb{E}\left[ \left. \frac{d_{i,t} (\omega_{i,t+1})}{\ell_i^r(\omega_{i,t+1})} \right|\mathcal{F}_t \right] \right)}^{-1}},
	$$
	which leads to
	$$
		\mathbb{E}\left[ \left. \frac{\ell_i^r(\omega_{i,t+1})}{d_{i,t} (\omega_{i,t+1})} \right|\mathcal{F}_t \right]\ge 1.
	$$
\end{proof}

\begin{lemma}{If we have $\mathbb{E}\left[ \left. \ell_i^r(\omega_{i,t+1})/d_{i,t} (\omega_{i,t+1}) \right|\mathcal{F}_t \right]\to 1$ for agent $i$, then we have $d_{i,t}(s)\xrightarrow[t\to\infty]{a.s.}\ell_i^r(s)$ for all possible signal $s\in S_i$.}
	\label{lemma:2}
\end{lemma}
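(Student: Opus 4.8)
The plan is to introduce a companion quantity that is \emph{affine} in the current belief, relate it to the hypothesis by Jensen's inequality, and then read off the conclusion coordinate by coordinate. Write the hypothesis as a statement about $U_t:=\mathbb{E}[\ell_i^r(\omega_{i,t+1})/d_{i,t}(\omega_{i,t+1})\mid\mathcal{F}_t]=\sum_{s\in S_i}g_i(s)\,\ell_i^r(s)/d_{i,t}(s)$, and alongside it consider the reciprocal expectation $V_t:=\mathbb{E}[d_{i,t}(\omega_{i,t+1})/\ell_i^r(\omega_{i,t+1})\mid\mathcal{F}_t]=\sum_{s\in S_i}g_i(s)\,d_{i,t}(s)/\ell_i^r(s)$, which already appears in the proof of Lemma \ref{lemma:1}. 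Both are $\mathcal{F}_t$-measurable. The goal is to show $V_t\to1$ almost surely and then exploit that $V_t$ is affine in $\mu_{i,t}$.

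First I would expand $V_t$ using $d_{i,t}(s)=\sum_m\ell_i^m(s)\mu_{i,t}(\theta_m)$ and the definition Eq.\eqref{eq:kigmr} of $k_i^g$, obtaining $V_t=\sum_m\mu_{i,t}(\theta_m)\exp k_i^g(\theta_m,\theta_r)$; the coefficient for $m=r$ equals $1$, and the conservative condition Eq.\eqref{eq:almostSureCondition} (together with the standing no-observational-equivalence convention) makes every coefficient with $m\ne r$ a fixed constant strictly below $1$. In particular $V_t\le1$, with $1-V_t=\sum_{m\ne r}\bigl(1-\exp k_i^g(\theta_m,\theta_r)\bigr)\mu_{i,t}(\theta_m)\ge0$. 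Next, applying Jensen's inequality to the convex map $x\mapsto1/x$ with the random variable $d_{i,t}(\omega_{i,t+1})/\ell_i^r(\omega_{i,t+1})$ under $g_i$ gives $U_tV_t\ge1$, i.e.\ $V_t\ge1/U_t$. Combining $1/U_t\le V_t\le1$ with the hypothesis $U_t\to1$ squeezes $V_t\to1$ almost surely.

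Finally, since $1-V_t$ is a nonnegative combination of the $\mu_{i,t}(\theta_m)$, $m\ne r$, with \emph{strictly positive} and time-independent weights $1-\exp k_i^g(\theta_m,\theta_r)$, the convergence $V_t\to1$ forces $\mu_{i,t}(\theta_m)\to0$ for every $m\ne r$, hence $\mu_{i,t}(\theta_r)\to1$ and $d_{i,t}(s)=\sum_m\ell_i^m(s)\mu_{i,t}(\theta_m)\to\ell_i^r(s)$ for every $s\in S_i$. Because each step is a deterministic identity or inequality holding for every sample realization, the almost sure convergence assumed in the hypothesis transfers directly to the conclusion.

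I expect the main obstacle to be the choice of the right auxiliary quantity rather than any hard estimate. Working with $U_t$ directly is misleading: as a function of $d_{i,t}$ on the simplex, $U_t$ is \emph{not} minimized at $\ell_i^r$ and its level set $\{U_t=1\}$ is an entire curve, so $U_t\to1$ by itself does not localize $d_{i,t}$. The device that rescues the argument is passing to the \emph{linear} quantity $V_t$, and the conservative hypothesis enters exactly here, guaranteeing that all off-$r$ coefficients of $V_t$ lie strictly below $1$; a radical coordinate would instead contribute a coefficient exceeding $1$, destroying $V_t\le1$ and, consistent with Proposition \ref{THM:NETWORKUNCERTAIN}, the conclusion itself.
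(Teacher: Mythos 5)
Your proof is correct, and it takes a genuinely different --- and in fact sounder --- route than the paper's own argument. The paper manipulates $U_t-1=\sum_{s}g_i(s)\bigl[\ell_i^r(s)-d_{i,t}(s)\bigr]/d_{i,t}(s)$ by adding the identically vanishing term $\sum_s \ell_i^r(s)-\sum_s d_{i,t}(s)$, obtains a one-parameter family $\sum_s\bigl[c\,g_i(s)+d_{i,t}(s)\bigr]\bigl[\ell_i^r(s)-d_{i,t}(s)\bigr]/d_{i,t}(s)\to 0$, and asserts that this ``equation set only has zero solution.'' That assertion does not hold: since the added term is zero for \emph{every} forecast $d_{i,t}$, the family collapses back to the single condition $U_t\to1$, and the level set $\{d: \sum_s g_i(s)\ell_i^r(s)/d(s)=1\}$ is never a singleton --- it always contains $d=g_i$ as well as $d=\ell_i^r$ (because $\sum_s g_i(s)\ell_i^r(s)/g_i(s)=1$), and these are distinct exactly in the imperfect case of Assumption \ref{asspt:imperfect}. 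Concretely, for the radical structure $L_{(2)}$ of Table \ref{tb:cs}, the belief $\mu_{i,t}=(7/8,1/8)$ yields $d_{i,t}=g_i=(0.8,0.2)$, hence $U_t=1$ while $d_{i,t}\ne\ell_i^r$; so the lemma as literally stated, with no restriction on the signal structure, fails --- precisely your closing observation, and the mechanism behind Proposition \ref{THM:NETWORKUNCERTAIN}. What rescues the lemma, and what the paper's proof silently needs but never invokes, is the conservative condition Eq.\eqref{eq:almostSureCondition}, which guarantees that the rest of the level set lies outside the attainable forecast hull $\bigl\{\sum_m \ell_i^m\,\mu(\theta_m):\mu\in\Delta\Theta\bigr\}$. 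Your argument makes this precise in the cleanest way: $V_t=\sum_m\mu_{i,t}(\theta_m)\exp k_i^g(\theta_m,\theta_r)$ is affine in the belief with constant coefficients, conservativeness puts every off-$r$ coefficient strictly below $1$ so that $V_t\le1$, Jensen gives $V_t\ge 1/U_t$, and the squeeze forces $\mu_{i,t}(\theta_m)\to0$ for all $m\ne r$ pathwise. As a bonus, your pinch delivers $\mu_{i,t}\xrightarrow[t\to\infty]{a.s.}\mathbf{1}_r$ directly, so it subsumes Lemma \ref{lemma:3} and shortens the proofs of Propositions \ref{THM:INDIV_CONSERV} and \ref{THM:NETWORKALMOSTSURE}. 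The only adjustment I would ask for is cosmetic but important: state the conservative hypothesis explicitly in the lemma (as Lemma \ref{lemma:3} does), since --- as your own counterexample logic shows --- the bare hypothesis $U_t\to1$ is not sufficient.
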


\begin{proof} Note that both $\ell_i^r$ and $d_{i,t}$ are probability measures on $S_{i}$, and we have
	$$
		\sum_{s\in S_i} \ell_i^r(s) = \sum_{s\in S_i} d_{i,t}(s)= 1.
	$$
	Consider
	\begin{equation*}
		\begin{split}
			& \mathbb{E}\left[ \left. \frac{\ell_i^r(\omega_{i,t+1})}{d_{i,t} (\omega_{,it+1})} \right|\mathcal{F}_t \right]-1 \\
			=& \sum_{s\in S_i} {g_{i}(s)\frac{ \ell_i^m(s)-d_{i,t} (s) }{d_{i,t} (s)}} + \sum_{s\in S_i} {\ell_i^r(s)}-\sum_{s\in S_i} {d_{i,t} (s)} \\
			=& \sum_{s\in S_i} {\frac{[ g_{i}(s) + d_{i,t}(s)][ \ell_i^r(s)-d_{i,t}(s)]}{d_{i,t} (s)}} \xrightarrow[t\to\infty]{a.s.}0.
		\end{split}
	\end{equation*}
	Now we have
	$$
		\sum_{s \in S_i} {g_i(s)\frac{\ell_i^r(s) - d_{i,t}(s)}{d_{i,t} (s)}}\xrightarrow[t\to\infty]{a.s.}0,
	$$
	and at the same time we have
	$$
		\sum_{s \in S_i} {\frac{[ g_i(s) + d_{i,t} (s) ][ \ell_i^s(s) - d_{i,t} (s) ]}{d_{i,t} (s)}}\xrightarrow[t\to\infty]{a.s.}0,
	$$
	that means for arbitrary $c$,
	\begin{equation}\label{eq:equations}
		\sum_{s\in S_i} {\frac{[ c g_i(s) + d_{i,t} (s) ][ \ell_i^r(s) - d_{i,t} (s) ]}{d_{i,t} (s)}}\xrightarrow[t\to\infty]{a.s.}0.
	\end{equation}
	Because $d_{i,t} (s)$ is positive for all $s\in S_{i}$, Equation set Eq.\eqref{eq:equations} only has zero solution, thus we have $d_{i,t} (s)\xrightarrow[t\to\infty]{a.s.}\ell_i^r(s)$ for all possible signals $s\in S_{i}$.
\end{proof}

\begin{lemma}
	For any agent $i$'s private signal structure $L_{i}$ satisfying Eq.\eqref{eq:almostSureCondition}, if we have $d_{i,t} (s)\xrightarrow[t\to\infty]{a.s.}\ell_i^r(s)$ for all possible signal $s\in S_{i}$, then $\mu_{i,t} \xrightarrow[t\to\infty]{a.s.}\mathbf{1}_r$.
	\label{lemma:3}
\end{lemma}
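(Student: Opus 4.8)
The plan is to reduce the whole statement to the behavior of a single scalar sequence that Eq.\eqref{eq:almostSureCondition} controls directly. Recall that $d_{i,t}(s)=\sum_m \ell_i^m(s)\mu_{i,t}(\theta_m)$. First I would form the weighted ratio $R_t=\sum_{s\in S_i}g_i(s)\,d_{i,t}(s)/\ell_i^r(s)$ and interchange the two finite sums to get
\[
	R_t=\sum_m \mu_{i,t}(\theta_m)\sum_{s\in S_i}g_i(s)\frac{\ell_i^m(s)}{\ell_i^r(s)}=\sum_m \mu_{i,t}(\theta_m)\,e^{k_i^g(\theta_m,\theta_r)},
\]
where, by the definition Eq.\eqref{eq:kigmr}, each exponent is $k_i^g(\theta_m,\theta_r)$ and by convention $k_i^g(\theta_r,\theta_r)=0$. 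Since $S_i$ is finite and each $g_i(s),\ell_i^r(s)$ is a fixed positive constant, the hypothesis $d_{i,t}(s)\xrightarrow[t\to\infty]{a.s.}\ell_i^r(s)$ for all $s$ transfers immediately to $R_t\xrightarrow[t\to\infty]{a.s.}\sum_{s\in S_i}g_i(s)=1$.

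The key step is then to read off a lower bound on $\mu_{i,t}(\theta_r)$. Set $\kappa=\max_{m\ne r}e^{k_i^g(\theta_m,\theta_r)}$; Eq.\eqref{eq:almostSureCondition} forces $k_i^g(\theta_m,\theta_r)<0$ for every $m\ne r$, so $\kappa<1$. Splitting off the $m=r$ term and using $\sum_{m\ne r}\mu_{i,t}(\theta_m)=1-\mu_{i,t}(\theta_r)$,
\[
	R_t=\mu_{i,t}(\theta_r)+\sum_{m\ne r}\mu_{i,t}(\theta_m)\,e^{k_i^g(\theta_m,\theta_r)}\le \mu_{i,t}(\theta_r)+\kappa\bigl(1-\mu_{i,t}(\theta_r)\bigr),
\]
whence $\mu_{i,t}(\theta_r)\ge (R_t-\kappa)/(1-\kappa)$. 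On the full-measure event where $R_t\to 1$, the right-hand side tends to $1$; since always $\mu_{i,t}(\theta_r)\le 1$, I conclude $\mu_{i,t}(\theta_r)\xrightarrow[t\to\infty]{a.s.}1$, and therefore $\mu_{i,t}\xrightarrow[t\to\infty]{a.s.}\mathbf{1}_r$.

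The part that needs care — and where Eq.\eqref{eq:almostSureCondition} does its real work — is the identifiability issue buried in this final deduction. Writing $d_{i,t}(s)-\ell_i^r(s)=\sum_{m\ne r}[\ell_i^m(s)-\ell_i^r(s)]\mu_{i,t}(\theta_m)$, forecast convergence only says that a certain convex combination of the columns $\ell_i^m$ approaches $\ell_i^r$; when $|\Theta|>2$ the difference vectors $\ell_i^m-\ell_i^r$ may be linearly dependent, so this alone does not force each $\mu_{i,t}(\theta_m)\to 0$, and a naive linear-algebra argument stalls here. The scalar contraction by $\kappa<1$ sidesteps the difficulty precisely because it encodes that $\ell_i^r$ cannot be reproduced as a convex mixture of the wrong-state columns: any relation $\sum_{m\ne r}\lambda_m\ell_i^m=\ell_i^r$ with $\lambda_m\ge 0$, $\sum_{m\ne r}\lambda_m=1$ would give $\sum_{m\ne r}\lambda_m e^{k_i^g(\theta_m,\theta_r)}=\sum_{s\in S_i}g_i(s)=1$, impossible since every factor is strictly below $1$. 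Thus the only genuine obstacle is confirming that strict negativity of $k_i^g$ survives the averaging as a uniform gap $\kappa<1$, which the finiteness of $\Theta$ secures; in the binary case this collapses to the trivial remark that the single vector $\ell_i^1-\ell_i^2$ is nonzero.
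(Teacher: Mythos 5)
Your proof is correct and is essentially the paper's own argument in streamlined form: the paper derives the per-signal relations $\sum_{m\ne r}\mu_{i,t}(\theta_m)\left(1-\ell_i^m(s)/\ell_i^r(s)\right)\xrightarrow[t\to\infty]{a.s.}0$, stacks them as $\mathcal{L}_i\mu_{i,t}\xrightarrow[t\to\infty]{a.s.}\mathbf{0}$, and then multiplies on the left by $g_i$, which produces exactly your scalar relation, since $1-R_t=\sum_{m\ne r}\mu_{i,t}(\theta_m)\bigl(1-e^{k_i^g(\theta_m,\theta_r)}\bigr)$ is precisely the paper's quantity $\varepsilon\mu_{i,t}$. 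Both arguments then conclude from the strict positivity of the coefficients $1-e^{k_i^g(\theta_m,\theta_r)}$ guaranteed by Eq.\eqref{eq:almostSureCondition} (your uniform gap $\kappa<1$), so the mechanism --- $g_i$-averaging of the forecast ratio plus the conservative condition --- is identical, with your version merely bypassing the intermediate per-signal and matrix formulation.
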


\begin{proof}
	The Bayesian part of the update is distribution over set $\Theta$ means
	\begin{equation*}
		\begin{split}
			&\sum_{m=1}^{M}{ \mu_{i,t} ({\theta_m})\frac{\ell_i^m(s)}{d_{i,t} (s)}} \\
			=& \sum_{m\ne r}{ \mu_{i,t} ({\theta_m})\frac{\ell_i^m(s)}{d_{i,t} (s)}} + \mu_{i,t} (\theta_r)\frac{\ell_i^r(s)}{d_{i,t} (s)} \\
			=& 1.
		\end{split}
	\end{equation*}
	With condition $d_{i,t} (s)\xrightarrow[t\to\infty]{a.s.}\ell_i^r(s)$ for all $s \in S_i$, we have
	$$
		\sum_{m\ne r}{ \mu_{i,t} ({\theta_m})\frac{\ell_i^m(s)}{d_{i,t} (s)}} + \mu_{i,t} (\theta_r)\xrightarrow[t\to\infty]{a.s.}1,
	$$
	leading to
	$$
		\sum_{m\ne r}{ \mu_{i,t} ({\theta_m})\frac{\ell_i^m(s)}{d_{i,t} (s)}}-\left( 1- \mu_{i,t} (\theta_r) \right)\xrightarrow[t\to\infty]{a.s.}0.
	$$
	Because $\mu_{i,t}\in\Delta\Theta$, we have $\mu_{i,t} (\theta_r)=1-\sum_{m\ne r}{ \mu_{i,t} ({\theta_m})}$, thus
	\begin{equation}
		\sum_{m\ne r}{ \mu_{i,t} ({\theta_m})\left( 1-\frac{\ell_i^m(s)}{\ell_i^r(s)} \right)}\xrightarrow[t\to\infty]{a.s.}0,
		\label{eq:beliefEquation}
	\end{equation}
	for $s \in S_{i}$. Note that in Eq.\eqref{eq:almostSureCondition}, $g_{i}\in\Delta S_i$, means $\sum_{s\in S_i} {g_i(s)}=1$, which leads to
	\begin{equation}
		\sum_{s\in S_i} {g_i(s)\left( 1-\frac{\ell_i^m(s)}{\ell_i^r(s)} \right)}>0,
		\label{eq:globalEquation}.
	\end{equation}
	for $m=1,2,\cdots M\text{, }m\ne r$.

	Now, let $\mathcal{L}_i$ be a $|S_i|\times|\Theta|$ matrix of which the entry at the $x$-th row and $y$-th column is $1-\ell_i^y(s_x)/\ell_i^r(s_x)$, and entries of $r$-th column are zero, then Eq.\eqref{eq:beliefEquation} and Eq.\eqref{eq:globalEquation} could be merged into
	\begin{equation}
		\mathcal{L}_i \mu_{i,t} \xrightarrow[t\to\infty]{a.s.}{\mathbf{0}}
		\label{eq:translatedEquation}
	\end{equation}
	and
	$$
		{[{g_{i}}\mathcal{L}_i]}_m>\text{0},\text{ for all }m\ne r.
	$$
	Let an $m-$entry error vector $\varepsilon$ satisfies
	$$\varepsilon_m =
		\begin{cases}
			[{g_{i}}\mathcal{L}_i]_m & m\ne r \\
			0                        & m = r  \\
		\end{cases},
	$$
	multiplying both sides of Eq.\eqref{eq:translatedEquation} by $g_{i}$ from left leads to $\varepsilon \mu_{i,t} \xrightarrow[t\to\infty]{a.s.}0,$ which means $ \mu_{i,t} ({\theta_m})\xrightarrow[t\to\infty]{a.s.}0$ for all $m\ne r$, then we have $ \mu_{i,t} \xrightarrow[t\to\infty]{a.s.}\mathbf{1}_r$.
\end{proof}

\subsection{Proof of Proposition \ref{THM:INDIV_CONSERV}}

\begin{proof}
	For a specific agent $i$, evaluate Eq.\eqref{eq:isolatedLearningRule} at the real state $\theta_r$,
	$$
		{{\mu }_{i,t+1}}(\theta_r)= \mu_{i,t} (\theta_r)\frac{\ell_i^r(\omega_{i,t+1})}{d_{i,t} (\omega_{i,t+1})},
	$$
	thus
	$$
		\mathbb{E}\left[ \left. {{\mu }_{i,t+1}}(\theta_r) \right|\mathcal{F}_t \right]= \mu_{i,t} (\theta_r)\mathbb{E}\left[ \left. \frac{\ell_i^r(\omega_{i,t+1})}{d_{i,t} (\omega_{i,t+1})} \right|\mathcal{F}_t \right].
	$$
	Lemma \ref{lemma:1} guarantees
	\begin{equation*}
		\begin{split}
			&\mathbb{E}\left[ \left. {{\mu }_{i,t+1}}(\theta_r) \right|\mathcal{F}_t \right]- \mu_{i,t} (\theta_r) \\
			&= \mu_{i,t} (\theta_r)\left( \mathbb{E}\left[ \left. \frac{\ell_i(^r\omega_{i,t+1}|}{d_{i,t} (\omega_{i,t+1})} \right|\mathcal{F}_t \right]-1 \right) \ge 0,
		\end{split}
	\end{equation*}
	thus ${{\mu }_{i,t+1}}(\theta_r)$ is a submartingale, and it converges almost surely. Due to the fact that $\mu_{i,t} (\theta_r)$ is non-negative, by the dominated convergence theorem for conditional expectations, we have
	$$
		\mathbb{E}\left[ \left. \frac{\ell_i^r(\omega_{i,t+1})}{d_{i,t} (\omega_{i,t+1})} \right|\mathcal{F}_t \right]\xrightarrow[t\to\infty]{a.s.}1,
	$$
	then Lemma \ref{lemma:2} and Lemma \ref{lemma:3} guarantee $ \mu_{i,t} \xrightarrow[t\to\infty]{a.s.}\mathbf{1}_r$.
\end{proof}

\subsection{Proof of Proposition \ref{THM:NETWORKALMOSTSURE}}

We first introduce another lemma.
\begin{lemma}\label{lemma:4}
	Suppose $v$ is any non-negative left eigenvector of the influence matrix $A$ corresponding to its unit eigenvalue, then $v{{\mu }_{t+1}}(\theta_r)$ is a submartingale, it converges almost surely.
\end{lemma}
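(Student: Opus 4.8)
The plan is to use the left-eigenvector property of $v$ to cancel the social-averaging term in Eq.\eqref{eq:learning_rule2} and thereby reduce the whole question to the Bayesian self-renewal term that Lemma \ref{lemma:1} already controls. First I would specialize Eq.\eqref{eq:learning_rule2} to the real state $\theta_r$ and write it compactly as $\mu_{t+1}(\theta_r)=A\mu_t(\theta_r)+D_t\mu_t(\theta_r)$, where $D_t$ is the diagonal matrix whose $i$-th entry equals $a_{ii}\big[\ell_i^r(\omega_{i,t+1})/d_{i,t}(\omega_{i,t+1})-1\big]$. Multiplying on the left by $v$ and using $vA=v$ collapses the social part, leaving
\begin{equation*}
	v\mu_{t+1}(\theta_r)=v\mu_t(\theta_r)+vD_t\mu_t(\theta_r).
\end{equation*}

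Next I would take the conditional expectation with respect to $\mathcal{F}_t$. Since $v$, the weights $a_{ii}$, and $\mu_{i,t}(\theta_r)$ are all $\mathcal{F}_t$-measurable and only the signal $\omega_{i,t+1}$ is random, this gives
\begin{equation*}
	\mathbb{E}\big[v\mu_{t+1}(\theta_r)\,\big|\,\mathcal{F}_t\big]-v\mu_t(\theta_r)=\sum_i v_i a_{ii}\,\mu_{i,t}(\theta_r)\left(\mathbb{E}\left[\left.\frac{\ell_i^r(\omega_{i,t+1})}{d_{i,t}(\omega_{i,t+1})}\right|\mathcal{F}_t\right]-1\right).
\end{equation*}
Because every agent is conservative, each parenthesized factor is positive by Lemma \ref{lemma:1}; together with $v_i\ge0$ (non-negative eigenvector), $a_{ii}>0$ (Assumption \ref{asspt:positiveSelfReliance}), and $\mu_{i,t}(\theta_r)\ge0$, the entire sum is non-negative. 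Hence $\mathbb{E}[v\mu_{t+1}(\theta_r)\,|\,\mathcal{F}_t]\ge v\mu_t(\theta_r)$, which is exactly the submartingale property.

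Finally, for the almost sure convergence I would appeal to the submartingale convergence theorem. Each belief $\mu_{i,t}(\theta_r)$ lies in $[0,1]$, so $v\mu_t(\theta_r)=\sum_i v_i\mu_{i,t}(\theta_r)$ is sandwiched between $0$ and $\sum_i v_i<\infty$; a uniformly bounded submartingale converges almost surely, which gives the claim. I expect the only delicate point to be the clean termwise separation: it is precisely the non-negativity of $v$ combined with $vA=v$ that lets Lemma \ref{lemma:1} act coordinate by coordinate, and this is where strong connectivity (irreducibility of $A$, and hence the existence of a genuine non-negative left eigenvector for the unit eigenvalue via Perron--Frobenius) carries the weight, rather than any hard estimate.
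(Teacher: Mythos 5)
Your proposal is correct and follows essentially the same route as the paper's proof: multiply Eq.\eqref{eq:learning_rule2} evaluated at $\theta_r$ on the left by $v$, use $vA=v$ to kill the averaging term, take conditional expectations, and invoke Lemma \ref{lemma:1} together with non-negativity of $v_i$, $a_{ii}$, and $\mu_{i,t}(\theta_r)$ to get the submartingale inequality. Your closing step is in fact slightly more explicit than the paper's, which asserts almost sure convergence without spelling out the boundedness $0\le v\mu_t(\theta_r)\le\sum_i v_i$ that justifies applying the martingale convergence theorem.
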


\begin{proof}
	Evaluate Eq.\eqref{eq:learning_rule2} at the real state and multiply both sides by $v$ from the left, we have
	\begin{equation*}
		\begin{split}
			&v{{\mu }_{t+1}}(\theta_r)=v\mu_t(\theta_r) \\
			&+\sum_{i=1}^N{v_i a_{ii}\mu_{i,t}(\theta_r)\left[ \frac{\ell_i^r(\omega_{i,t+1})}{d_{i,t} (\omega_{i,t+1})}-1 \right]}.
		\end{split}
	\end{equation*}
	thus
	\begin{equation}
		\begin{split}
			&\mathbb{E}\left[ \left. v{{\mu }_{t+1}}(\theta_r) \right|\mathcal{F}_t \right] =v\mu_t(\theta_r)\\
			&+ \sum_{i=1}^N{v_i a_{ii}\mu_{i,t}(\theta_r)\mathbb{E}\left[ \left. \frac{\ell_i^r(\omega_{i,t+1})}{d_{i,t} (\omega_{i,t+1})}-1 \right|\mathcal{F}_t \right]}.
		\end{split}
		\label{eq:dyanmicAsMartingale}
	\end{equation}
	Lemma \ref{lemma:1} guarantees
	$$\mathbb{E}\left[ \left. \frac{\ell_i^r(\omega_{i,t+1})}{d_{i,t} (\omega_{i,t+1})}-1 \right|\mathcal{F}_t \right]\ge 0.$$
	Strongly connected topology ensures the influence matrix is irreducible; thus, $v_i>0$ and ${{\mu }_{i,m}}(t)>0$ for all $i\in \mathcal{N}$ and all ${\theta_m}\in \Theta$, the second part on the right of Eq.\eqref{eq:dyanmicAsMartingale} is positive which indicate $v{{\mu }_{t}{(\theta_r)}}$ is a submartingale. Hence, it converges almost surely.
\end{proof}

\begin{proof}[Proof of Proposition \ref{THM:NETWORKALMOSTSURE}]
	Consider Eq.\eqref{eq:learning_rule2} together with the conclusion from Lemma \ref{lemma:4}, we have
	$$
		\sum_{i=1}^N{v_i a_{ii} \mu_{i,t} (\theta_r)\left[ \frac{\ell_i^r(\omega_{i,t+1})}{d_{i,t} (\omega_{i,t+1})}-1 \right]}\xrightarrow[t\to\infty]{a.s.}0.
	$$
	Therefore by the dominated convergence theorem for conditional expectations,
	$$
		\sum_{i=1}^N{v_i a_{ii} \mu_{i,t} (\theta_r)\left( \mathbb{E}\left[ \left. \frac{\ell_i^r(\omega_{i,t+1})}{d_{i,t} (\omega_{i,t+1})} \right|\mathcal{F}_t \right]-1 \right)}\xrightarrow[t\to\infty]{a.s.}0.
	$$
	Since $ a_{ii}>0$, entries of ${{v}^{T}}$ and $ \mu_{i,t} ({\theta_{k}})$ are all strictly positive, we have
	$$
		\mathbb{E}\left[ \left. \frac{\ell_i(\omega_{i,t+1}|{\theta_{k}})}{d_{i,t} (\omega_{,it+1})} \right|\mathcal{F}_t \right]\xrightarrow[t\to\infty]{a.s.}1,
	$$
	then, Lemma \ref{lemma:2} and Lemma \ref{lemma:3} guarantee $ \mu_{i,t} \xrightarrow[t\to\infty]{a.s.}\mathbf{1}_r$ for all $i\in \mathcal{N}$.
\end{proof}

\subsection{Proof of Proposition \ref{THM:NETWORKUNCERTAIN}}

\begin{proof}
	We prove Proposition \ref{THM:NETWORKUNCERTAIN} by contradiction. If the network could learn the real state, it must achieve consensus and convergence. Therefore, we need to prove that when the network reaches consensus, the network could not converge to $\mathbf{1}_r$.

	If consensus occurs, members in the network would share the same belief profile. Examine learning rule Eq.\eqref{eq:learning_rule1} at the consensus status, Eq.\eqref{eq:learning_rule1} could be rewritten as:
	$$
		\mu_{i,t+1}(\theta_m) = a_{ii}\mu_{i,t}(\theta_m)\frac{{\ell_i^m}(\omega_{i,t+1})}{d_{i,t}(\omega_{i,t+1})} + (1-a_{ii})\mu_{i,t}(\theta_m)
	$$
	for all $m$. 
	Then we have
	\begin{equation}\label{eq:proof_5_1}
		\begin{split}
		&\mathbb{E}{[\mu_{i,t+1}(\theta_r)|\mathcal{F}_t]}-\mu_{i,t}(\theta_r) \\
		&= a_{ii}\mu_{i,t}(\theta_r)\left(\sum_sg_i(s) \frac{{\ell_i^r}(s)}{d_{i,t}(s)} - 1 \right) \\
    	&=a_{ii}\mu_{i,t}(\theta_r)\left(\sum_sg_i(s) \frac{{\ell_i^m}(s)}{\sum\ell_i^m(s)\mu_{i,t}(\theta_m)} - 1\right).
		\end{split}
	\end{equation}
	Let $\mu_{i,t}(\theta_r)=1-\epsilon$, $\mu_{i,t}(\theta_{\hat m})=\epsilon$, and $\mu_{i,t}(\theta_m)=0$ for all $m\ne r,\hat m$, then the items in brackets on the right side of \eqref{eq:proof_5_1} is a function of $\epsilon$ as following:
	$$
		f(\epsilon) = \sum_sg_i(s) \frac{1}{1-\epsilon + \epsilon\frac{\ell_i^{\hat m}(s)}{{\ell_i^r}(s)}} - 1,
	$$
	moreover, its directive is
	$$
		f^\prime(\epsilon) = \sum_sg_i(s) \frac{1-\frac{\ell_i^{\hat m}(s)}{{\ell_i^{\hat m}}(s)}}{\left(1-\epsilon + \epsilon\frac{\ell_i^{\hat m}(s)}{{\ell_i^r}(s)}\right)^2}.
	$$
	Evaluate $f^\prime(\epsilon)$ at $\epsilon=0$, we have
	$$
	f^\prime(0)=\sum_sg_i(s)\left( 1-\frac{\ell_i^{\hat m}(s)}{{\ell_i^r}(s)}\right) = 1 - \sum_s g_i(s)\frac{\ell_i^{\hat m}(s)}{{\ell_i^r}(s)}.
	$$
	Since agent is radical, we have $f^\prime(0)<0$, then together with the fact $f(0)=0$ and continuity of $f(\epsilon)$, there exists an interval $(0, c),c>0$ in which $f(\epsilon)<0$. Thus, $\mu_{i,t}(\theta_r)$ is a supermartingale in the corresponding interval, which contradict with $\mu_{i,t}(\theta_r)\xrightarrow{t\rightarrow\infty}1$.
\end{proof}

\subsection{Proof of Proposition \ref{THM:LEARNINGRATE}}

\begin{proof}
	For any $m\ne r$, taking logarithms of both sides of Eq.\eqref{eq:learning_rule3} and using Jensen's inequality implies
	$$
		\log\mu_{i,t+1}(\theta_m)\geq \gamma \log\bigg(\frac{\ell_i^m(\omega_{i,t+1})}{d_{i,t}(\omega_{i,t+1})}\bigg) + \sum_{j=1}^n a_{ij}\log \mu_{j,t}(\theta_m)
	$$
	Since $vA=v$, let $x_{t}(\theta_m) = \sum_{i=1}^nv_i\log\mu_{i,t}(\theta_m)$, multiplying both sides of the above inequality by $v_i$ and summing up over $i$ lead to
	\begin{equation}
		\label{equation:beforeAddingUp}
		x_{t+1}(\theta_m) - x_{t}(\theta_m) \ge q_t(\theta_m),
	\end{equation}
	where
	$$
		q_t(\theta_m)=\gamma\sum_{i=1}^{n}v_i\log\bigg(\frac{\ell_i^m(\omega_{i,t+1})}{d_{i,t}(\omega_{i,t+1})}\bigg).
	$$
	Then by summing up over $t$, Eq.\eqref{equation:beforeAddingUp} leads to
	\begin{equation}
		\label{equation:beforeTakingLimit}
		\frac{1}{T}(x_{T}(\theta_m)-x_{0}(\theta_m)) \ge \frac{1}{T} \sum_{t=0}^{T-1}q_t(\theta_m).
	\end{equation}
	Lemma \ref{lemma:2} indicates $d_{i,t}(s)\xrightarrow[t\to\infty]{a.s.}\ell_i^r(s)$ for all $s$, hence
	$$
		p_t(\theta_m)=\gamma\sum_{i=1}^{n}v_i\log\bigg(\frac{\ell_i^m(\omega_{i,t+1})}{\ell_i^r(\omega_{i,t+1})}\bigg)\xrightarrow[t\to\infty]{a.s.}q_t(\theta_m).
	$$
	Since $\lim_{t\rightarrow\infty} \frac{1}{t} x_{0}(\theta_m)=0$, taking the limit of both sides of Eq.\eqref{equation:beforeTakingLimit} as $t\rightarrow\infty$ implies
	\begin{equation*}
		\begin{split}
			\limsup_{T\rightarrow\infty} \frac{1}{T} x_{T}(\theta_m) \ge& \lim_{T\rightarrow\infty} \frac{1}{T} \sum_{t=0}^{T-1}(q_t(\theta_m) -p_t(\theta_m)) \\
			+& \lim_{T\rightarrow\infty}\sum_{t=0}^{T-1}p_t(\theta_m).
		\end{split}
	\end{equation*}
	Given that $q_t(\theta_m)-p_t(\theta_m)$ converges to zero almost surely, the first term on the right-hand side of the above inequality is equal to zero almost surely.
	Furthermore, by the strong law of large numbers, the second term on the right-hand side is equal to $\mathbb{E}[p_t(\theta_m)]$. Therefore, we have
	\begin{equation}\label{eq:firstEquality}
		\begin{split}
			\limsup_{T\rightarrow\infty}\frac{1}{T} &\sum_{i=1}^nv_i\log\mu_{i,t} (\theta_m) \ge\mathbb{E}[p_t(\theta_m)] \\
			=& \gamma\sum_{i=1}^{n}v_ih_i^g(\theta_m, \theta_r).
		\end{split}
	\end{equation}
	By belief uncertainty definition Eq(\ref{eq:beliefUncertainty}), we have
	$$
		e_t=\sum_{i=1}^{n}\sum_{m\ne r}\mu_{i,t}(\theta_m),
	$$
	and as a result
	\begin{equation*}
		\begin{split}
			\log e_t\geq& \log\big(\max_{m\ne r}\max_i\mu_{i,t}(\theta_m)\big) \\
			=&\max_{m\ne r}\max_i\log\mu_{i,t}(\theta_m) \\
			\geq&\max_{m\ne r}\sum_{i=1}^n v_i\log\mu_{i,t}(\theta_m).
		\end{split}
	\end{equation*}
	Thus with Eq(\ref{eq:firstEquality}), for all $m\ne r$,
	\begin{equation*}
		\begin{split}
			\limsup_{t\rightarrow\infty}\frac{1}{t}\log e_t\geq& \limsup_{t\rightarrow\infty}\frac{1}{t}\max_{m\ne r}\sum_{i=1}^n v_i\log\mu_{i,t}(\theta_m) \\
			\geq& \gamma \max_{m\ne r}\sum_{i=1}^{n}v_i h_i^g(\theta_m,\theta_r) \\
			\geq& \gamma \max_{m\ne r}\sum_{i=1}^{n}v_i k_i^g(\theta_m,\theta_r).
		\end{split}
	\end{equation*}
	almost surely, where the second inequality is a consequence of Eq.(\ref{eq:firstEquality}). Consequently,
	\begin{equation*}
		\lambda=\liminf_{t\rightarrow\infty}\frac{1}{t}|\log e_t|\leq \gamma \min_{m\ne r}\sum_{i=1}^{n}v_i|k_i^g(\theta_m,\theta_r)|
	\end{equation*}
	almost surely.
\end{proof}

\bibliography{arxiv.bib}
\bibliographystyle{IEEEtran}

\end{document}